\newif\ifFull
\newtheorem {theorem} {Theorem}
\newtheorem {lemma}[theorem] {Lemma}
\newtheorem {corollary}[theorem] {Corollary}
\newenvironment {proof}{\noindent \textbf {Proof:}}{\mbox{}\hfill \ensuremath {\square}\medskip}
\newenvironment {sketch}{\noindent \textbf {Proof sketch:}}{\mbox{}\hfill \ensuremath {\square}\medskip}
\newcommand{\tw}{\ensuremath{\tau}}
\newcommand{\ext}{\mathrm{ext}}
\newcommand{\etalii}{et~al.}
\title{Pixel and Voxel Representations of Graphs%
  \thanks{This work was started at the 2014 Bertinoro Workshop on
    Graph Drawing. We thank the organizers for creating an
    inspiring atmosphere and Sue Whitesides for suggesting the
    problem.  A. Wolff acknowledges support by the ESF EuroGIGA project
    GraDR (DFG grant Wo~758/5-1).}}
\author
{
	Md.~Jawaherul~Alam\thanks{University of Arizona, Department of Computer Science, \texttt{mjalam@cs.arizona.edu}}
\and
	Thomas~Bl{\"a}sius\thanks{Karlsruhe Institute of Technology, Department of Computer Science, \texttt{thomas.blaesius@kit.edu}}
\and
	Ignaz~Rutter\thanks{Karlsruhe Institute of Technology, Department of Computer Science, \texttt{rutter@kit.edu}}
\and
	Torsten~Ueckerdt\thanks{Karlsruhe Institute of Technology, Department of Mathematics, \texttt{torsten.ueckerdt@kit.edu}}
\and
	Alexander~Wolff\thanks{Universit\"at W\"urzburg, Department of Computer Science, \texttt{alexander.wolff@uni-wuerzburg.de}}
}
\begin{document}

\maketitle

\begin{abstract}
  We study 
  contact representations for graphs, which we call \emph{pixel
    representations} in 2D and \emph{voxel representations} in 3D.
  Our representations are based on the unit square grid whose cells we call
  pixels in 2D and voxels in 3D.  Two pixels are adjacent if they
  share an edge, two voxels if they share a face.  We call a connected
  set of pixels or voxels a \emph{blob}.  Given a graph, we represent
  its vertices by disjoint blobs such that two blobs contain adjacent
  pixels or voxels if and only if the corresponding vertices are
  adjacent.  We are interested in the size of a representation, which
  is the number of pixels or voxels it consists of.

  We first show that finding minimum-size representations is
  NP-complete.  Then, we bound representation sizes needed for certain
  graph classes.  In 2D, we show that, for $k$-outerplanar graphs with
  $n$ vertices, $\Theta(kn)$ pixels are always sufficient and
  sometimes necessary.  In particular, outerplanar graphs can be
  represented with a linear number of pixels, whereas general planar
  graphs sometimes need a quadratic number.  In 3D, $\Theta(n^2)$
  voxels are always sufficient and sometimes necessary for any
  $n$-vertex graph.  We improve this bound to $\Theta(n\cdot \tw)$
  for graphs of treewidth~$\tw$ and to $O((g+1)^2n\log^2n)$ for graphs
  of genus $g$.  In particular, planar graphs admit representations
  with $O(n\log^2n)$ voxels.
\end{abstract}

\section{Introduction}
\label{sec:introduction}

In Tutte's landmark paper ``How to draw a graph'', he introduces
barycentric coordinates as a tool to draw triconnected planar graphs.
Given the positions of the vertices on the outer face (which must be
in convex position), the positions of the remaining vertices are
determined as the solutions of a set of equations.  While
the solutions can be approximated numerically, and symmetries tend to
be reflected nicely in the resulting drawings, the ratio between the
lengths of the longest edge and the shortest edge is exponential in
many cases.  This deficiency triggered research directed towards
drawing graphs on grids of small size in both 2D and 3D for different
graph drawing paradigms; Brandenburg \etalii~\cite{BEGKLM03} listed
this as an important open problem.
In \textit{straight-line grid drawings}, the vertices are at integer
grid points and the edges are drawn as straight-line segments. Both
Schnyder~\cite{s-epgg-SODA90} and de
Fraysseix~\etalii~\cite{fpp-hdpg-Combinatorica90}, 
 gave algorithms for drawing any $n$-vertex planar graph on a grid
 of size $O(n) \times O(n)$. There has also been research towards
 drawing subclasses of planar graphs on small-area grids.
 For example, any $n$-vertex outerplanar graph can
 be drawn in area $O(n^{1.48})$~\cite{df-sedog-Algorithmica09}.
 Similar research has also been done for
 other graph drawing problems, such as \textit{polyline drawings},
 where edges can have bends~\cite{Bie11}, \textit{orthogonal
   drawings}, where edges are polylines consisting of only
 axis-aligned segments~\cite{Bie11,CGKT02}, and for drawing graphs in
 3D~\cite{DMW13,PTT97,p-cr3do-JDA08} 

A \textit{bar visibility representation}~\cite{TT86} draws a graph in a different way:
 the vertices are horizontal segments and the edges are realized by
 vertical line-of-sights between corresponding segments. 
Improving earlier results, Fan \etalii~\cite{FLLY07} showed
that any planar graph admits a visibility representation of size
$(\lfloor4n/3\rfloor-2)\times (n-1)$.
Generalized visibility representations for non-planar graphs have been
considered in 2D~\cite{eklmw-b1vgo-JGAA14,Bran14}, and in 3D~\cite{BEFH+98}.
 In all these and many subsequent papers, the size of a drawing is measured
 as the area or volume of the bounding box.


 Yet another approach to drawing graphs are the so-called
 \textit{contact representations}, where vertices are interior-disjoint geometric objects
 such as lines, curves, circles, polygons, polyhedra, etc. and edges
 correspond to pairs of objects touching in some specified way.
 An early work by Koebe~\cite{Koebe36} represents planar graphs with
 touching disks in 2D. Any planar graph can also be represented by contacts
 of triangles~\cite{FMR94}, by side-to-side contacts
 of hexagons~\cite{GHKK10}
and of axis-aligned $T$-shape
 polygons~\cite{ourDCG13,FMR94}. 2D-contact representations of graphs
 with curves~\cite{Hli98}, line-segments~\cite{FM07-segment},
$L$-shapes~\cite{CKU13}, 
homothetic triangles~\cite{homothetic07},
 squares and rectangles~\cite{Felsner13,BGPV08} have also been studied.
Of particular interest are the so-called \textit{VCPG-representations} 
introduced by Aerts and Felsner~\cite{AF14}.  In such a representation,
vertices are represented by interior-disjoint paths in the plane
square grid and an edge is a contact between an endpoint of one path and
an interior point of another.  Aerts and Felsner showed that for
certain subclasses of planar graphs, the maximum number of bends per
path can be bounded by a small constant.

 Contact representations in 3D allow us to visualize non-planar graphs, but
 little is known about contact representations in 3D: Any planar
 graph can be represented by contacts of cubes~\cite{FF11}, and by face-to-face
 contact of boxes~\cite{BEF+12,Thom88}. Contact representations of
 complete graphs and complete bipartite graphs in 3D have been studied
 using spheres~\cite{BR13,HK01}, cylinders~\cite{Bez05}, and tetrahedra~\cite{Zong96}.
 In 3D as well as in 2D, the complexity of a contact representation is usually measured
 in terms of the \textit{polygonal complexity} (i.e., the number of
 corners) of the objects used in the representation.

In this paper, in contrast, we are interested in ``building'' graphs,
and so we aim at minimizing the cost of the building material---think
of unit-size Lego-like blocks that can be connected to each other
face-to-face. We represent each vertex by a connected set of building
blocks, which we call a \emph{blob}. If two
vertices are adjacent, the blob of one vertex contains a block
that is connected (face-to-face) to a block in the blob of the other.
 The blobs of two non-adjacent vertices are not connected.
We call the building blocks \textit{pixels} in 2D and \textit{voxels} in 3D.
Accordingly, the 2D and 3D
 variants of such representations are called \textit{pixel} and \textit{voxel}
 representations, respectively. We define the \textit{size} of a pixel or
 voxel representation to be the total number of  boxes it consists of.
(We use \textit{box} to denote either pixel or voxel when the
dimension is not important.)
 
Although pixel representations can be seen as generalizations of
VCPG-repre\-sen\-ta\-tions where grid subgraphs instead of grid paths
are used, 
minimizing or bounding the size of such representations has not been
studied, so far, neither in 2D nor in 3D.

\paragraph{Our Contribution.}

We first investigate the complexity of our problem: finding
mini\-mum-size representations turns out to be NP-complete (
Section~\ref{sec:complexity}). Then, we give lower and upper bounds
for the sizes of 2D- and 3D-representations for certain graph classes:
\begin{itemize}
\item In 2D, we show that, for $k$-outerplanar graphs with $n$
  vertices, $\Theta(kn)$ pixels are always sufficient and sometimes
  necessary (see Section~\ref{sec:lower-upper-bounds-2d}).  In
  particular, outerplanar graphs can be represented with a linear
  number of pixels, whereas general planar graphs sometimes need a
  quadratic number.
\item In 3D, $\Theta(n^2)$ voxels are always sufficient and sometimes
  necessary for any $n$-vertex graph (see
  Section~\ref{sec:repr-3d-space}).  We improve this bound to
  $\Theta(n\cdot \tw)$ for graphs of treewidth~$\tw$ and to
  $O((g+1)^2n\log^2n)$ for graphs of genus $g$.  In particular, $n$-vertex planar
  graphs admit voxel representations with $O(n\log^2n)$ voxels.
\end{itemize}

\bigskip

\section{Complexity}
\label{sec:complexity}

\ifFull

First, we show that it is NP-hard to compute 
minimum-size pixel representations. We reduce from the problem of
deciding whether a planar graph of maximum degree~4 has a
grid drawing where every edge has length~1.
Bhatt and Cosmadakis~\cite{bc-cmwlvl-87} showed that this problem is NP-hard
(even if the graph is a binary tree).
Their proof still works if the angles between adjacent edges are specified.
 Note that this also prescribes the circular order of edges around vertices up to reversal.
%
%

\begin{figure}
  \centering
  \includegraphics[page=1]{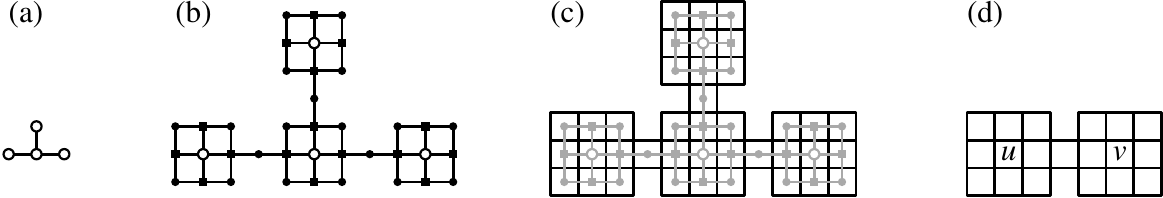}
  \caption{(a)~The graph $G$ with prescribed angles between edges.
    The edges in the drawing have length~1.  (b)~The graph $H$ drawn
    with edges of length~1.  (c)~Representation of $H$ with unit
    squares.  (d)~Unique representation with unit squares of a
    subgraph of $H$ corresponding to an edge $uv$ in $G$.}
  \label{fig:complexity}
\end{figure}

\begin{theorem}
  \label{thm:complexity}
  It is NP-complete to minimize the size of a pixel representation of
  a planar graph.
\end{theorem}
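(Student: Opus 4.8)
The plan is to establish both membership in NP and NP-hardness for the decision version that asks whether a planar graph $G$ admits a pixel representation of size at most $k$. For membership, I would take a representation of size at most $k$ as the certificate. Since any $n$-vertex planar graph admits a representation with $O(n^2)$ pixels (the $O(kn)$ bound for $k$-outerplanar graphs applied with $k\le n$), we may assume $k=O(n^2)$; otherwise the instance is trivially positive. Such a representation can be encoded with polynomially many bits: within each connected component the pixels form a connected region of diameter at most $k$, and the distinct components can be placed on a coarse auxiliary grid so that all coordinates stay polynomially bounded. Verifying that the blobs are pairwise disjoint, individually connected, and mutually adjacent exactly along the edges of $G$ takes polynomial time, so the problem lies in NP.

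For hardness, I would reduce from the problem of Bhatt and Cosmadakis~\cite{bc-cmwlvl-87}: given a planar graph $G$ of maximum degree~$4$ together with a prescribed angle (a multiple of $90^\circ$) between every pair of consecutive edges, decide whether $G$ has a planar grid drawing in which each edge has length exactly~$1$ and the angles are realized. As the excerpt notes, this remains NP-hard and additionally fixes the rotation system up to reflection, so such a drawing is exactly an injective placement of $V(G)$ on $\mathbb{Z}^2$ sending adjacent vertices to grid-adjacent cells with the prescribed rotation. The observation driving the reduction is that a pixel representation has size equal to its number of vertices if and only if every blob is a single pixel, which happens exactly when the graph embeds as an \emph{induced} subgraph of the grid (vertices to distinct cells, adjacency to shared sides, non-adjacency to non-adjacency).

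I would therefore build from $G$ a planar graph $H$ whose induced-grid embeddings correspond precisely to the unit-length grid drawings of $G$, and set the threshold to $N=|V(H)|$. To do so, I subdivide each edge of $G$ into a path and prescribe a straight ($180^\circ$) angle at every internal path vertex, so that an edge must be realized as a straight run of unit segments, while the subdivision simultaneously provides enough spacing that no two vertices of distinct edges are ever forced onto grid-adjacent cells. At each original vertex I attach a small rigid gadget (Figure~\ref{fig:complexity}) whose unique minimal realization exposes ports in exactly the cardinal directions dictated by the prescribed angles, thereby hard-coding the rotation system. A unit-length grid drawing of $G$ then scales up to an induced-grid embedding of $H$ of size exactly $N$; conversely, any representation of $H$ of size $N$ forces every blob to be a single pixel and every gadget into its rigid shape, which assembles into a unit-length grid drawing of $G$.

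The main obstacle will be establishing the rigidity together with the matching lower bound. I must design the vertex gadget so that its minimum representation is genuinely unique and pins the ports to the claimed directions, and the edge gadget so that its only minimum realization is the intended straight segment (Figure~\ref{fig:complexity}(d)). I must then prove that any representation of $H$ uses at least $N$ pixels, with equality only for configurations projecting to a valid drawing---in particular ruling out ``cheating'' realizations that exploit spurious adjacencies or bend an edge to save pixels, which is where the careful counting argument is needed. Planarity of $H$ is immediate, since all gadgets are attached locally following the planar rotation system inherited from $G$.
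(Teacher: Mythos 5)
Your framework coincides with the paper's: same source problem (unit-length grid drawing with prescribed angles~\cite{bc-cmwlvl-87}), same threshold $N=|V(H)|$, and the same key observation that a size-$N$ representation is exactly an induced grid embedding (one pixel per vertex); your NP-membership argument is also fine and in fact more detailed than the paper's. The genuine gap is that the load-bearing part of the reduction---the gadgets---is left as an ``obstacle,'' and the one concrete mechanism you do propose does not work. You cannot ``prescribe a straight ($180^\circ$) angle'' at internal vertices of a subdivision path: $H$ is a bare graph, and the pixel-representation problem takes no angle constraints as input, so any such prescription must be enforced structurally. A path of subdivision vertices, each a single pixel, can bend freely at every internal vertex; a rigid gadget at the two original endpoints does nothing to transmit rigidity along the path. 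Hence, with paths of two or more internal vertices, $H$ admits size-$N$ representations in which the edges of $G$ are realized as arbitrary disjoint grid paths rather than straight unit segments, and such representations can exist even when $G$ has no unit-length drawing. The backward direction of your reduction therefore fails as sketched, and the ``careful counting argument'' you defer is not a routine verification but the entire content of the proof.

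What is missing is the construction that makes straightness and rigidity come for free. The paper replaces every vertex of $G$ by a wheel on five vertices (center plus four rim vertices, with the original edges attached to the rim according to the prescribed angles) and then subdivides every edge \emph{exactly once}, except the spokes. The subdivided wheel has a unique one-pixel-per-vertex realization, a $3\times 3$ square of pixels---this is simultaneously your ``rigid port gadget.'' Each original edge of $G$ then carries a single subdivision vertex squeezed between two such rigid squares: its pixel must touch the unique free side of the rim pixel of each square, and an L-shaped (bent) placement would force the two $3\times 3$ squares to overlap, which is impossible since blobs are disjoint (see Fig.~\ref{fig:complexity}). So no edge gadget and no global counting argument are needed: non-overlap alone forces the centers of adjacent squares to lie at distance exactly $4$, horizontally or vertically, and scaling by $1/4$ recovers a unit-length drawing of $G$ respecting the angles. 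Until you exhibit gadgets with these properties, your argument is a plan for a proof rather than a proof.
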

\begin{proof}
  Clearly the corresponding decision problem is in NP, thus it remains
  to show NP-hardness. Let $G$ be a planar graph of maximum
  degree~4 and assume that the angles between adjacent edges are
  prescribed.  We define a graph~$H$ as follows (see
  Figs.~\ref{fig:complexity}a and~~\ref{fig:complexity}b).  First,
  replace every vertex by a
  wheel with five vertices such that the angles between the edges are
  respected.  Second, subdivide every edge except those that are
  incident to the center of a wheel.  We claim that $G$ admits a grid
  drawing with edges of length~1 (respecting the prescribed angles)
  if and only if $H$ admits a representation where every
  vertex is represented by exactly one pixel.

  Assume $G$ admits a grid drawing with edges of length~1.  Scaling
  the drawing by a factor of~4 and suitably adding the new vertices
  and edges clearly yields a drawing of $H$ with edges of length~1,
  such that two vertices have distance~1 only if they are adjacent;
  see Fig.~\ref{fig:complexity}b.  For every vertex~$v$ of~$H$, we
  create a pixel $P_v$ with $v$ at its center
  (Fig.~\ref{fig:complexity}c).  Clearly, for two adjacent vertices
  $u$ and $v$ in $H$, the pixels $P_u$ and $P_v$ touch as the edge
  $uv$ has length~1 in the drawing of $H$.  Moreover, two pixel $P_u$
  and $P_v$ touch only if $u$ and $v$ have distance~1 and thus only if
  $u$ and $v$ are adjacent.  Hence, this set of pixels is a pixel
  representation of~$H$.

  Conversely, assume $H$ admits a representation such that every
  vertex $v$ is represented by a single pixel.  Obviously, the
  subdivided wheel of size~4 has a unique representation (up to
  symmetries) consisting of a square of $3\times 3$ pixels.  Consider
  two adjacent vertices $u$ and $v$ of $G$.  Then there is a $3\times
  3$ square for $u$ and one for $v$.  As $u$ and $v$ adjacent in $G$,
  there must be a pixel representing the subdivision vertex on the
  edge $uv$ in $H$ that touches both $3\times 3$ squares (of $u$ and
  $v$) as in Fig.~\ref{fig:complexity}d.  Thus, the straight line from
  the center of the square representing $u$ to the center of the
  square representing $v$ is either horizontal or vertical and has
  length~4.  Hence, we obtain a drawing of $G$ where every edge has
  length~4.  Scaling this drawing by a factor of ${1}/{4}$ yields a
  grid drawing of~$G$ with edges of length~1.
\end{proof}

\else
First, we show that it is NP-hard to compute minimum-size pixel representations.
 We reduce from the problem of deciding if a planar graph of maximum degree~4 has a
grid drawing with edges of length~1.
This problem is known to be NP-hard~\cite{bc-cmwlvl-87}.
The hardness proof still works if the angles between adjacent edges are
specified.  Note that specifying the angles also prescribes the
circular order of edges around vertices (up to reversal).
We give a sketch of the hardness proof here, 
details are in Appendix~\ref{sec:app-NPC}.\\[-5ex]
\begin{wrapfigure}[10]{r}{.39\textwidth}
  \vspace{0.4cm}
  \centering
   \includegraphics{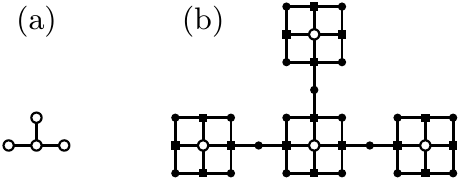}
  \caption{(a)~The graph $G$ with prescribed angles between edges,
    and edges drawn with length~1. (b)~The graph $H$ drawn
    with edge length~1.}
  \label{fig:complexity-short}
\vspace{-0.1cm}
\end{wrapfigure}

\wormholeThm{thm-pixel-NPC}
\begin{theorem}
  \label{thm:complexity}
  It is NP-complete to minimize the size of a pixel representation of
  a planar graph.
\end{theorem}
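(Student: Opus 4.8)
The plan is to establish membership in NP and then prove NP-hardness by a reduction from the problem of deciding whether a planar graph $G$ of maximum degree~$4$, with prescribed angles between adjacent edges, admits a grid drawing in which every edge has length exactly~$1$; this problem is NP-hard by Bhatt and Cosmadakis~\cite{bc-cmwlvl-87}, and their hardness is unaffected by fixing the angles (which in turn fixes the rotation system up to reflection). Membership in NP is routine: any representation of size at most~$k$ can be assumed to occupy a region of size polynomial in~$k$ (a connected union of $m$ pixels fits in an $m\times m$ box), so it can be encoded by listing the polynomially bounded coordinates of its at most~$k$ pixels, and checking that two blobs share a pixel-edge exactly for adjacent vertices is done in polynomial time.

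For the reduction I would build from~$G$ a planar graph~$H$ whose minimum representation size equals its number of vertices precisely when $G$ has the desired unit-length drawing. The idea is to make each vertex of~$G$ \emph{rigid} by replacing it with a small gadget that can only be realized as a fixed block of pixels, and to transmit the unit-distance constraint across each edge by a single subdivision vertex. Concretely, I replace every vertex of~$G$ by a five-vertex wheel (a central vertex joined to a $4$-cycle), respecting the prescribed angles, and then subdivide every edge except the four spokes of each wheel. The resulting graph~$H$ is planar and has a linear number of vertices. Each subdivided wheel has exactly nine vertices, so the target size is $|V(H)|$, an obvious lower bound attained only when every vertex is represented by a single pixel.

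The core of the argument is the equivalence between unit-length drawings of~$G$ and single-pixel representations of~$H$. The forward direction is the easy one: given a unit-length grid drawing of~$G$ respecting the angles, scale it by a factor of~$4$, place the wheels and the subdivision vertices at the now-available integer points, and put one pixel centered at each vertex; the scaling guarantees that two vertices are at grid distance~$1$, and hence their pixels touch, exactly when they are adjacent in~$H$. The reverse direction rests on a \emph{rigidity lemma} for the gadget: in any single-pixel representation, the central pixel of a wheel must have its four spoke-neighbors on its four sides, and the four subdivided rim vertices are then forced into the corner cells, so that each subdivided wheel occupies a $3\times 3$ block of pixels, unique up to the symmetries of the square. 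Given this, the subdivision pixel on an edge $uv$ of~$G$ must touch both the block of~$u$ and the block of~$v$, which forces the two $3\times 3$ blocks to be axis-aligned at distance~$4$ center-to-center; mapping each block to its center and rescaling by $1/4$ recovers a unit-length grid drawing of~$G$ with the prescribed angles.

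I expect the rigidity lemma to be the main obstacle: one must verify that the \emph{only} way to realize the subdivided wheel with one pixel per vertex is the $3\times 3$ block, ruling out degenerate or ``folded'' configurations where non-adjacent vertices would accidentally share a pixel-edge, and then argue that the lone bridging pixel on each edge cannot cheat by touching a block diagonally or with an offset, so that adjacency in~$H$ is faithfully translated into the unit-distance and angle constraints of~$G$. Once this geometric rigidity is pinned down, the two implications together with the NP-membership check combine to yield NP-completeness.
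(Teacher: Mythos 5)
Your proposal is correct and takes essentially the same route as the paper: the same reduction from unit-length grid drawing of degree-4 planar graphs with prescribed angles (Bhatt--Cosmadakis), the same gadget (five-vertex wheel with all edges subdivided except the spokes), the same scaling by~$4$, the same $3\times 3$-block rigidity argument, and the same rescaling by~$1/4$ in the reverse direction. The only difference is one of emphasis: you flag the uniqueness of the subdivided wheel's single-pixel representation as a rigidity lemma still to be verified, whereas the paper asserts it as obvious (and likewise treats NP-membership as immediate, where you spell out the polynomial encoding).
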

\begin{sketch}
 Let $G$ be a planar graph of maximum degree~4 with prescribed
 angles between edges. Construct a graph~$H$ by
 replacing each vertex by a wheel of five vertices so that the
 angles between the edges are respected, and subdividing each
 edge except the ones incident to the wheel-centers.
 Then $G$ has a grid drawing with edge length~1
 if and only if $H$ has a representation where each vertex is a pixel.
 Indeed, from a grid drawing of $G$ one can obtain a drawing of $H$
 where two vertices have distance~1 if and only if they are adjacent;
 see Fig.~\ref{fig:complexity-short}. Represent each vertex~$v$ of~$H$
 by a pixel with $v$ at its center. Conversely, if $H$ has a representation
 where each vertex is a pixel, then for each vertex~$v$ of~$G$, the
 subdivided wheel is a $3\times 3$ square. Placing each vertex~$v$ at the
 center of the square and scaling by $1/4$ yields the grid drawing of $G$.
\end{sketch}

\fi

Next, we reduce 
computing mini\-mum-size pixel representations to
computing mini\-mum-size voxel representations.
\ifFull

\begin{theorem}
  It is NP-complete to minimize the size of a voxel representation of
  a graph.
\end{theorem}
\begin{proof}
  Again, the corresponding decision problem is clearly in NP.  To show
  NP-hard\-ness, we reduce from the 2D case.  To this end, we build a
  rigid structure called \emph{cage} that forces the graph in which we
  are actually interested to be drawn in a single plane.

  \begin{figure}[tb]
    \centering
    \includegraphics[page=1]{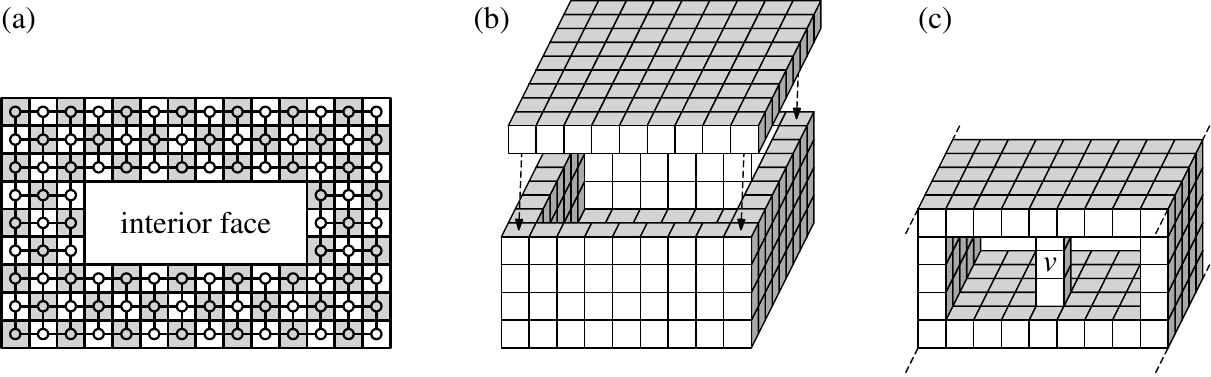}
    \caption{Illustration for the hardness proof in 3D. (a) A
      2-dimensional cage with thickness $3$ and interior face of size
      $8 \times 3$.  (b) A 3-dimensional cage with thickness $1$ and
      interior of size $7 \times 3 \times 7$. (c) Attaching $v$ to two
      sides of the box forces it into interior of cage.}
    \label{fig:complexity-3d}
  \end{figure}

  To simplify notation, we first prove for the 2-dimensional
  equivalent of a 3-dimensional cage that it actually is a rigid
  structure.  We then extend this to 3D.  The cage is basically the
  grid graph with a hole; see Fig.~\ref{fig:complexity-3d}a.  
  More precisely, the cage is defined
  by two parameters, the \emph{thickness} $t$, which is an integer,
  and by the \emph{interior} $w\times h$, which is a rectangle with
  integer width $w$ and integer height $h$.  Given these parameters,
  the corresponding cage is the graph obtained from the $(2t+w) \times
  (2t+h)$ grid by deleting a $w\times h$ grid such that the distance
  from the external face to the large internal face corresponding to
  the interior is~$t$.  We call this internal face the \emph{interior
    face}.  Fig.~\ref{fig:complexity-3d}a shows the cage with
  thickness~$3$ and interior $8\times 3$ together with a contact
  representation with exactly one pixel per vertex.

  Consider a pixel representation $\Gamma$ of the cage of
  thickness $t$ with interior $w\times h$.  We show that either the
  bounding box of the interior face has size at most $w\times h$ or
  $\Gamma$ uses at least one pixel per vertex plus $t$ additional
  pixels.  Thus, if we force some structure to lie in the
  interior of the cage, we can make the cost for using an area
  exceeding $w\times h$ arbitrarily large by increasing the thickness
  $t$ appropriately.

  We partition the cage into cycles $C_1, \dots, C_t$ where the
  vertices of $C_i$ have distance $i$ from the interior face.
  Consider $C_1$, which is the cycle bounding the interior face.  The
  cycle $C_1$ has four \emph{corner vertices} that are incident to two
  vertices in the outer face of~$C_1$.  All remaining vertices are
  incident to one vertex in the outer face.  Requiring $C_1$ to be
  represented with exactly one pixel per vertex such that the
  corner vertices have two sides and every other vertex has one side
  incident to the outer face implies that $C_1$ must form a rectangle
  of size $(w+1)\times(h+1)$.  Thus, if the bounding box of the
  interior face exceeds $w \times h$, $C_1$ requires at least one
  additional pixel.  Moreover, the bounding box of the outer
  face of $C_1$ exceeds $(w+1)\times(h+1)$.  Hence, an inductive
  argument shows that one requires at least one additional pixel
  for each of the cycles $C_1, \dots, C_t$, which shows the above
  claim.

  Analogously, we can build cages in 3D with thickness $t$ and
  interior $w \times h \times d$, by taking a 3D grid of size $(2t +
  w) \times (2t + h) \times (2t + d)$ and deleting a grid of size $w
  \times h \times d$.  Fig.~\ref{fig:complexity-3d}b shows the cage
  with~$t=1$, and $w\times h \times d = 7 \times 3 \times 7$.  Assume
  that we have a graph~$G$ for which we want to find an minimum-size pixel
  representation (in 2D).  We build a 3D cage, choose $t$, $w$ and
  $d$ to be very large, and set $h = 3$.  To force~$G$ to lie in the
  interior of the cage, we pick a vertex $v$ of $G$ and connect it to
  two vertices of the cage as shown in Fig.~\ref{fig:complexity-3d}c.
  This forces $G$ to completely lie in the interior of the cage.  As
  this interior has height~$3$ and no vertex of $G$ (except for $v$)
  is allowed to touch another vertex of the cage, $G$ is forced to lie
  in a single plane when choosing $t$ sufficiently large (obviously,
  polynomial size is sufficient). Moreover, choosing~$w$ and~$d$
  sufficiently large, ensures that the size of the plane available
  for~$G$ does not restrict the possible representations of~$G$.
  Finding a minimum-size pixel representation of~$G$ is
  equivalent to finding a minimum-size voxel representation of the resulting graph~$G'$.
\end{proof}

\else
In our reduction, we build a rigid structure around the given graph
that forces the given graph to be drawn in a single plane; see
Appendix~\ref{sec:app-NPC}. 

\wormholeThm{thm-voxel-NPC}
\begin{theorem}
\label{thm:voxel-NPC}
  It is NP-complete to minimize the size of a voxel representation of
  a graph.
\end{theorem}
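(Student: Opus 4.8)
The plan is to first settle membership in NP and then prove NP-hardness by reducing from the minimization of pixel-representation size, which is NP-complete by Theorem~\ref{thm:complexity}. Membership is routine: a voxel representation whose size is polynomially bounded serves as a certificate, and verifying that disjoint blobs realize exactly the prescribed adjacencies takes polynomial time; one also argues that an optimal representation never needs super-polynomially many voxels, so the witness stays short. The substance is therefore the hardness reduction.

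For the reduction, given an instance $G$ of the 2D problem I would build a graph $G'$ consisting of $G$ together with a rigid auxiliary structure, a \emph{cage}, that confines $G$ to a thin axis-aligned slab of the 3D grid and thereby collapses the 3D problem onto the 2D one. The cage is the 3D grid of dimensions $(2t+w)\times(2t+h)\times(2t+d)$ from which an interior box of dimensions $w\times h\times d$ is deleted, leaving a shell of thickness $t$. I would take $t$, $w$, and $d$ polynomially large, set $h=3$, and attach a single vertex $v$ of $G$ to two faces of the cage so that $G$ is pinned into the interior cavity; no other vertex of $G$ may touch the cage.

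The technical heart, and the hard part, is proving that the cage is \emph{rigid}: any voxel representation using close to one voxel per cage vertex is forced into the obvious grid-shaped representation. I would isolate the combinatorics by first treating the 2D analog. Partition the 2D cage into nested cycles $C_1,\dots,C_t$, where $C_i$ consists of the cage vertices at distance $i$ from the interior face, and induct starting from the innermost cycle $C_1$. The cycle $C_1$ has exactly four corner vertices, each incident to two vertices of its outer face, while every remaining vertex of $C_1$ is incident to exactly one; with one pixel per vertex this incidence pattern forces $C_1$ to be an axis-aligned rectangle of size exactly $(w+1)\times(h+1)$ whenever the interior bounding box does not exceed $w\times h$, and otherwise forces at least one extra pixel. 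An inductive step propagates the same conclusion outward through $C_2,\dots,C_t$, so any representation whose interior bounding box exceeds $w\times h$ must pay at least $t$ additional pixels beyond the one-per-vertex optimum. Choosing $t$ polynomially large then pins the interior to the prescribed $w\times h$ footprint in every near-optimal solution.

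Finally I would lift this to 3D and verify correctness. The same nested-shell argument, now with box-shells in place of cycles, shows that a near-optimal voxel representation of the 3D cage is forced into its grid shape and confines $G$ to the interior cavity of height $3$. Because the cavity has height $3$ and, apart from $v$, no vertex of $G$ touches the cage, taking $t$ large enough forces the entire copy of $G$ into a single axis-aligned plane, where a voxel representation degenerates into a pixel representation; taking $w$ and $d$ large enough guarantees that the available plane never restricts which pixel representations of $G$ are realizable. It then remains to check the equivalence of optima: the restriction to $G$ of an optimal voxel representation of $G'$ is an optimal pixel representation of $G$ and conversely, so the two optimization problems agree up to the fixed additive cost of the cage and the attachment, which completes the reduction.
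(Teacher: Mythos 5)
Your proposal is correct and follows essentially the same route as the paper's own proof: a reduction from the 2D pixel-minimization problem using the identical cage construction (the $(2t+w)\times(2t+h)\times(2t+d)$ grid minus a $w\times h\times d$ interior box), the same nested-cycle rigidity argument in 2D lifted to nested shells in 3D, the same parameter choices ($h=3$, with $t$, $w$, $d$ polynomially large), and the same trick of pinning a single vertex $v$ to the cage to confine $G$ to a plane inside the cavity. No meaningful differences to report.
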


\fi

\medskip

\section{Lower and Upper Bounds in 2D}
\label{sec:lower-upper-bounds-2d}

Here we only consider planar graphs since only planar graphs
 admit pixel representations. Let $G$ be a planar graph with fixed plane
 embedding $\mathcal E$. The embedding $\mathcal E$ is \emph{$1$-outerplane}
 (or simply outerplane) if all vertices are on the outer face. It is \emph{$k$-outerplane}
 if removing all vertices on the outer face yields a $(k-1)$-outerplane embedding.
 A graph $G$ is \emph{$k$-outerplanar} if it admits a $k$-outerplane embedding but no
 $k'$-outerplane embedding for $k' < k$.  Note that $k \in O(n)$, where $n$
 is the number of vertices of $G$.

In Section~\ref{sec:lower-bound}, we show that pixel
representations of an $n$-vertex $k$-outerplanar graph sometimes
requires $\Omega(kn)$ pixels. As the number of pixels is a lower bound for
the area consumption, this strengthens a result by
 Dolev \etalii~\cite{dlt-pepg-84} that says that orthogonal drawings of planar
graphs of maximum degree~4 and \emph{width}~$w$ sometimes require
$\Omega(wn)$ area. As we will see later, width and $k$-outerplanarity
are very similar concepts.

In Section~\ref{sec:upper-bound}, we show that $O(kn)$ area and thus
using $O(kn)$ pixels is also sufficient. We use a result by
Dolev \etalii~\cite{dlt-pepg-84} who proved that any $n$-vertex planar
graph of maximum degree~4 and width~$w$ admits a planar orthogonal
drawing of area $O(wn)$.  The main difficulty is to extend their
result to general planar graphs.

\subsection{Lower Bound}
\label{sec:lower-bound}

Let $G$ be a $k$-outerplanar graph with a pixel representation~$\Gamma$. Note that
 a pixel representation $\Gamma$ induces an embedding of $G$. Let $\Gamma$ induce
a $k$-outerplane embedding of~$G$, which we call a
\emph{$k$-outerplane pixel representation} for short. 
We claim that the width and the height of~$\Gamma$ are at least $2k-1$.
 For $k = 1$ this is trivial as every (non-empty) graph requires width and height at least~$1$.
  For $k \geq 2$, let $V_\ext = \{v_1, \dots, v_\ell\}$ be the set of vertices incident to the outer face
 of $\Gamma$. Removing~$V_\ext$ from~$G$ yields a $(k-1)$-outerplane graph~$G'$
 with corresponding pixel representation~$\Gamma'$.  By induction, $\Gamma'$ requires
 width and height $2(k-1) - 1$.  As the representation of $V_\ext$ in $\Gamma$ encloses
 the whole representation~$\Gamma'$ in its interior, the width and the height of $\Gamma$
 are at least two units larger than the width and the height of $\Gamma'$, respectively.


Clearly, the number of pixels required by the vertices in~$V_\ext$ is at least the perimeter
 of~$\Gamma$ (twice the width plus twice the height minus~4 for the corners, which are shared)
 and thus at least $8k-8$. After removing the vertices in~$V_\ext$, the new vertices on the outer
 face require $8(k-1)-8$ pixels, and so on.  Thus, $\Gamma$ requires overall at least
 $\sum_{i=1}^{k}(8i-8) = 4k^2 -4k$ pixels, which gives the following lemma.

\begin{lemma}
  \label{lem:lower-bound-k-outerplanar}
  Any $k$-outerplane pixel representation has size at least
  $4k^2-4k$.
\end{lemma}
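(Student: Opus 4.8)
The plan is to proceed by induction on~$k$, peeling off one ``frame'' of outer-face vertices at a time and charging each frame a number of pixels proportional to its perimeter. The geometric fact I would establish first is that the bounding box of any $k$-outerplane pixel representation has width and height at least $2k-1$. For the base case $k=1$ this is immediate, since a nonempty representation occupies at least one row and one column. For the inductive step I would let $V_\ext$ be the set of vertices on the outer face of~$\Gamma$; by the definition of $k$-outerplanarity, deleting~$V_\ext$ leaves a $(k-1)$-outerplane embedding whose induced sub-representation~$\Gamma'$ satisfies the inductive hypothesis. Since the blobs of~$V_\ext$ bound the outer face and hence enclose~$\Gamma'$, there must be at least one occupied row above and below, and one occupied column left and right of~$\Gamma'$, so each dimension of~$\Gamma$ exceeds the corresponding dimension of~$\Gamma'$ by at least two. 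This gives width and height at least $2(k-1)-1+2 = 2k-1$.

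With the dimension bound in hand, I would count pixels one frame at a time. At the outermost level the bounding box of~$\Gamma$ is at least $(2k-1)\times(2k-1)$, and the pixels of~$V_\ext$ must form a closed frame surrounding~$\Gamma'$; the number of such frame pixels is at least the bounding-box perimeter $2\cdot(2k-1)+2\cdot(2k-1)-4 = 8k-8$, where the~$4$ accounts for the shared corners. Charging the frame $8k-8$ pixels, removing~$V_\ext$, and recursing on the $(k-1)$-outerplane representation~$\Gamma'$, the $i$-th innermost frame contributes at least $8i-8$ pixels. Summing over all $k$ frames then yields
\[
  \sum_{i=1}^{k}(8i-8) = 8\cdot\frac{k(k+1)}{2} - 8k = 4k^2-4k,
\]
as claimed.

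The step I expect to be the main obstacle is making the \emph{enclosure} argument fully rigorous: one must argue, purely from the fact that~$\Gamma$ realizes a $k$-outerplane embedding, that the blobs of the outer-face vertices genuinely surround the inner representation as a connected frame, so that (i)~their pixel count is at least the bounding-box perimeter and (ii)~the perimeters strictly shrink by at least two in each dimension when passing from one layer to the next. This is intuitively clear from planarity, since any inner vertex reaching the outer boundary could be rerouted to the outer face, but it requires care to verify that the induced embedding respects the geometric nesting so that the inductive hypothesis genuinely applies to~$\Gamma'$. Once this nesting is established, the per-layer counting and the final arithmetic are routine.
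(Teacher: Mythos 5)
Your proposal is correct and follows essentially the same argument as the paper: an induction showing that any $k$-outerplane pixel representation has width and height at least $2k-1$ (by peeling off the outer-face vertices, whose blobs enclose the inner $(k-1)$-outerplane sub-representation), followed by charging each of the $k$ layers at least $8i-8$ pixels via the bounding-box perimeter and summing to $4k^2-4k$. The enclosure step you flag as the main obstacle is treated at the same level of rigor in the paper itself, which likewise asserts the nesting from planarity of the induced embedding without further elaboration.
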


There are $k$-outerplanar graphs with $n$ vertices such that $k \in
\Theta(n)$.  For example, the nested triangle graph with $2k$ triangles 
(see Fig.~\ref{fig:nested-triangles}) has $n =
6k$ vertices and is $k$-outerplanar for $k \geq 2$.
Let $G$ be a graph with $c$ connected components
  each of which is $k$-outerplanar
and has $\Theta(k)$ vertices.  Then each connected component requires
$4k^2-4k$ pixels (due to Lemma~\ref{lem:lower-bound-k-outerplanar}) and thus
we need at least $(4k^2-4k)c$ pixels in total.  As $G$ has $n =
\Theta(kc)$ vertices, we get $(4k^2-4k)c \in \Theta(kn)$, which proves the following.

\begin{theorem}
  Some $k$-outerplanar graphs require $\Omega(kn)$-size pixel representations.
\end{theorem}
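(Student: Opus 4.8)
The plan is to bootstrap Lemma~\ref{lem:lower-bound-k-outerplanar}, which already provides an $\Omega(k^2)$ lower bound for a \emph{single} $k$-outerplane representation, into a two-parameter $\Omega(kn)$ bound by taking disjoint copies of a compact hard instance. First I would fix a building block: a $k$-outerplanar graph on $\Theta(k)$ vertices. The nested-triangle graph with $2k$ triangles serves this purpose, since it has $n_0 = 6k$ vertices and is $k$-outerplanar for $k \geq 2$ (see Fig.~\ref{fig:nested-triangles}). For this single graph the lemma yields a lower bound of $4k^2 - 4k$ pixels, which is $\Omega(k^2) = \Omega(k n_0)$ because $n_0 = \Theta(k)$. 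This already settles the balanced case $n = \Theta(k)$.

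To cover the full range of admissible pairs $(k,n)$, I would decouple $k$ from $n$ by forming the disjoint union $G$ of $c$ copies of the building block, so that $G$ has $n = \Theta(kc)$ vertices and is still $k$-outerplanar. The crux is then to argue that in \emph{any} pixel representation $\Gamma$ of $G$ the copies contribute their pixel costs additively. Restricting $\Gamma$ to the blobs of one copy yields a pixel representation of that copy, and its induced embedding is $k'$-outerplane for some $k' \geq k$, since the copy is $k$-outerplanar and hence admits no embedding of smaller outerplanarity. Because $4x^2 - 4x$ is increasing for $x \geq 1$, applying Lemma~\ref{lem:lower-bound-k-outerplanar} with parameter $k'$ bounds the size of each restricted representation from below by $4k^2 - 4k$.

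Finally I would sum these per-copy bounds. As all blobs in a representation are pairwise disjoint, the pixel sets used by distinct copies are disjoint, so the total size of $\Gamma$ is at least $c \cdot (4k^2 - 4k)$. Substituting $c = \Theta(n/k)$ gives $\Omega(k^2) \cdot \Theta(n/k) = \Omega(kn)$, which proves the claim.

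I expect the main obstacle to be the decoupling argument rather than any calculation: the single-graph lemma only matches $\Omega(kn)$ in the regime $n = \Theta(k)$, so the actual content is recognizing that disjoint copies let $k$ and $n$ vary independently while keeping the bound tight. The two supporting observations — that the induced embedding of each copy has outerplanarity at least $k$ (so the lemma still applies per copy), and that disjointness of blobs makes the per-copy costs add — are easy but must be stated explicitly to make the summation valid.
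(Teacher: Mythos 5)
Your proposal is correct and follows essentially the same route as the paper: the paper likewise takes $c$ disjoint copies of a $k$-outerplanar component with $\Theta(k)$ vertices (the nested-triangle graph), applies Lemma~\ref{lem:lower-bound-k-outerplanar} to each copy, and sums to get $(4k^2-4k)c \in \Theta(kn)$. Your write-up is in fact slightly more careful than the paper's, since you explicitly justify that each copy's induced embedding is $k'$-outerplane with $k' \geq k$ and that the bound $4x^2-4x$ is increasing, details the paper leaves implicit.
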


\subsection{Upper Bound}
\label{sec:upper-bound}

In the following two lemmas, we first show how to construct a pixel
representation from a given orthogonal drawing and that taking minors
does not heavily increase the number of pixels we need.  Both lemmas
aim at extending a result of Dolev~\etalii~\cite{dlt-pepg-84} on
orthogonal drawings of planar graphs with maximum degree~4 to pixel
representations of general planar graphs.  As we re-use both lemmas in
the 3D case (Section~\ref{sec:repr-3d-space}), we state them in the
general $d$-dimensional setting.

\begin{lemma}
  \label{lem:drawing-to-box-representation}
  Let $G$ be a graph with $n$ vertices, $m$ edges, and an orthogonal
  drawing of total edge length~$\ell$ in $d$-dimensional space.  Then
  $G$ admits a $d$-dimensional representation of size $2\ell + n - m$.
\end{lemma}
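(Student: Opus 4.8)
The plan is to turn the orthogonal drawing directly into a box representation by \emph{fattening} the drawing. Each vertex in the drawing sits at an integer grid point, and each edge is a polyline whose segments run parallel to the coordinate axes between consecutive integer grid points. The natural idea is to place one box at each integer grid point occupied by a vertex, and then lay down a path of boxes along each edge of the drawing, one box per unit-length segment of the polyline. Since two boxes are face-adjacent exactly when their grid points differ by one unit in a single coordinate, the chain of boxes we lay along an edge forms a connected blob, and it connects the box of one endpoint to the box of the other. The blob of a vertex is the union of its own box together with the boxes on all incident edge-paths, and adjacency between two vertices in $G$ holds if and only if their blobs share a face, because the drawing is a valid orthogonal drawing in which only edges of $G$ create contacts.

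The key steps, in order, are as follows. First I would fix an orthogonal drawing realizing total edge length $\ell$, and identify all integer grid points used. Second, I would assign to each vertex $v$ the single box at its grid point, and to each edge $e$ of length $\ell_e$ a sequence of boxes covering the $\ell_e$ interior unit-segments of its polyline; I must be careful to share the endpoint boxes with the incident vertices rather than double-counting them. Third, I would verify that the resulting collection of blobs is a valid $d$-dimensional representation: connectivity of each blob follows because the vertex box together with incident edge-chains forms a connected set; correctness of adjacencies follows because a face-contact between boxes of distinct blobs can only arise along a drawn edge (here I rely on the drawing being an embedding, so non-incident features do not touch).

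Finally I would count. The total number of grid points used is the $n$ vertex points plus, for each edge of length $\ell_e$, the $\ell_e - 1$ new interior grid points strictly between its endpoints (the two endpoints are already counted among the vertices). Summing over edges gives $n + \sum_{e}(\ell_e - 1) = n + \ell - m$ distinct boxes. Since each edge-chain also needs the cells themselves rather than just interior lattice points, the careful bookkeeping of how many boxes tile a length-$\ell_e$ polyline yields $2\ell_e$ boxes per edge before identifying shared vertex boxes, so that after subtracting the overlap the grand total is $2\ell + n - m$. I would present this count cleanly by charging each unit-segment of each edge a constant number of boxes and then correcting for the boxes shared between an edge and its endpoint vertices.

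The main obstacle I anticipate is the precise accounting that produces the constant $2$ rather than $1$ in front of $\ell$: naively one box per unit-segment would give roughly $\ell$ boxes for the edges, but to guarantee that \emph{only} incident blobs touch — in particular that two edge-chains running in parallel adjacent grid lines do not accidentally become face-adjacent — one must thicken or offset the edge representation, which is exactly what doubles the per-length cost. Getting this separation right while still keeping each blob connected, and ensuring no spurious contacts are introduced between the fattened edges of non-adjacent vertices, is the delicate part; the vertex bookkeeping ($+n-m$ correction) is then routine.
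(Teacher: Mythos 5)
Your overall plan --- one unit box per grid point, chains of boxes along the edges, subdivision boxes charged to the endpoint vertices --- is the same as the paper's, but there is a genuine gap exactly where you flag ``the delicate part'': you never resolve the spurious-contact problem, and your count does not hold together. For the unscaled drawing you correctly arrive at $n + \ell - m$ boxes (a rectilinear polyline of length $\ell_e$ passes through exactly $\ell_e + 1$ lattice points, so it contributes $\ell_e - 1$ interior boxes), and you then assert that ``careful bookkeeping'' of how boxes tile the polyline yields $2\ell_e$ boxes per edge; no such bookkeeping exists --- the discrepancy cannot be fixed by recounting the same construction. Worse, the unscaled construction is simply incorrect: an orthogonal drawing may place two non-adjacent vertices at distance exactly $1$, or route two edge segments along adjacent parallel grid lines, and then the corresponding unit boxes share a face, realizing an edge that is not in $G$.

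The missing idea in the paper is disarmingly simple: scale the drawing by a factor of $2$ \emph{before} placing any boxes, then subdivide every edge into unit-length pieces. After scaling, two points of the drawing that are not consecutive along a common edge are at distance at least $2$, so centering a unit box at every vertex and every subdivision vertex creates exactly the contacts corresponding to edges of the subdivided graph $G'$ --- the separation problem disappears without any thickening or offsetting. The scaling is also precisely where the coefficient $2$ comes from: an edge of original length $\ell_e$ becomes a path with $2\ell_e - 1$ subdivision vertices, so the total box count is $n + \sum_e (2\ell_e - 1) = 2\ell + n - m$, matching the lemma on the nose. Your alternative suggestion of ``thickening'' the edge chains would neither restore separation (two thickened parallel chains still touch) nor produce this count; the doubling in the bound is a side effect of the scaling, not a cost paid for wider edges.
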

\begin{proof}
  We first scale the given drawing $\Gamma$ of~$G$ by a factor of~$2$
  and subdivide the edges of $G$ such that every edge has length~1.
  Denote the resulting graph by~$G'$ and its drawing by~$\Gamma'$.  An
  edge~$e$ of length $\ell_e$ in~$\Gamma$ is represented by a path
  with $2\ell_e - 1$ internal vertices (the subdivision vertices).
  Thus, the total number of subdivision vertices is $2\ell - m$.  Due
  to the scaling, non-adjacent vertices in $G'$ have distance greater
  than~1 in~$\Gamma'$ (adjacent vertices have distance~1).  Thus,
  representing every vertex~$v$ by the
box having $v$ as center
  yields a representation of $G'$ with $2\ell + n - m$ boxes (one box
  per vertex of $G'$).  If we assign the boxes representing
  subdivision vertices to one of the endpoints of the corresponding
  edge, we get a representation of~$G$ with $2\ell + n - m$ boxes.
\end{proof}

\begin{figure}[tb]
\begin{minipage}[b]{.32\textwidth}
  \centering
  \includegraphics[page=1]{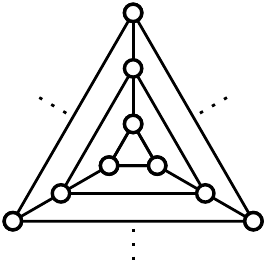}
  \caption{A nested triangle graph of outerplanarity $\Omega(n)$.}
  \label{fig:nested-triangles}
\end{minipage}
\hfill
\begin{minipage}[b]{.57\textwidth}
  \centering
  \begin{tabular}[b]{@{\hspace{-5.5ex}}c@{\hspace{2ex}}c@{\hspace{7ex}}c@{}}
    \includegraphics[page=1]{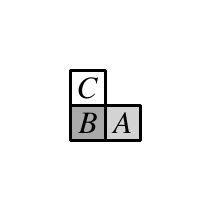} &
    \includegraphics[page=2]{minor} &
    \includegraphics[page=3]{minor} \\
    (a) & (b) & (c)
  \end{tabular}
  \caption{Constructing a representation of a minor with
    asymptotically the same number of blocks.}
  \label{fig:minor}
\end{minipage}
\end{figure}

\begin{lemma}
  \label{lem:minor}
  Let $G$ be a graph that has a $d$-dimensional representation of size
  $b$.  Every minor of~$G$ admits a $d$-dimensional representation of
  size at most $3^d b$.
\end{lemma}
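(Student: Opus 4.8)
The plan is to realize every minor through the branch-set description: a graph $H$ is a minor of $G$ exactly when there are pairwise disjoint, connected vertex sets $B_w \subseteq V(G)$, one for each $w \in V(H)$, such that every edge $ww' \in E(H)$ is witnessed by an edge of $G$ between $B_w$ and $B_{w'}$. Starting from the given $d$-dimensional representation $\Gamma$ of size $b$, I first scale $\Gamma$ by a factor of $3$, replacing each box by a $3\times\cdots\times 3$ block of $3^d$ sub-boxes; adjacent blocks stay face-adjacent, while non-adjacent blocks acquire a full layer of empty sub-boxes between them. Within this scaled grid I will carve out one connected \emph{core} per vertex of $H$, using only a selection of the sub-boxes, so that the total size never exceeds $3^d b$.

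For each $w$, let $\widetilde B_w$ be the union of the blobs of the vertices in $B_w$; since $B_w$ is connected in $G$ and adjacent vertices have face-adjacent blobs, $\widetilde B_w$ is a connected set of boxes. The core of $w$ consists of (i) the central sub-box of every block of $\widetilde B_w$, (ii) for every face-contact between two boxes of $\widetilde B_w$, the two face-center sub-boxes along that contact, and (iii) for one witnessing contact per edge $ww'\in E(H)$, the two face-center sub-boxes bridging $\widetilde B_w$ and $\widetilde B_{w'}$. Each added face-center sub-box is face-adjacent to the central sub-box of its own block, so (i) and (ii) keep every core internally connected, while (iii) joins the cores of $w$ and $w'$ exactly when $ww'$ is an edge of $H$. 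Note that every sub-box I use has at most one nonzero coordinate offset from its block center.

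It remains to verify that two cores touch if and only if the corresponding vertices of $H$ are adjacent, and here lies the main obstacle: suppressing the contacts that correspond to deleted edges (non-edges of $H$ between branch sets that happen to be adjacent in $G$). This is exactly what the factor-$3$ buffer buys. Two distinct cores can meet only across a face-contact of their respective blocks, and the only core sub-boxes lying on such an interface are the two face-centers along it; these are present only when rule (iii) added them, i.e.\ only for an edge of $H$. Diagonally placed blocks meet at most in a lower-dimensional face and hence contribute no face-contact, and the unused face sub-boxes form the separating gap for deleted edges. Disjointness of the cores follows from disjointness of the branch sets, connectivity from (i)--(ii), and the size bound from the fact that the cores are a subset of the $3^d b$ sub-boxes of the scaled grid; discarding the blobs of vertices outside every branch set only decreases the count.
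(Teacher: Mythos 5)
Your proof is correct, but it takes a genuinely different route from the paper. The paper proceeds operationally: it fixes a decomposition of the minor into edge deletions, deletions of isolated vertices, and edge contractions, scales the representation by $3$, and then modifies $3\Gamma$ one operation at a time --- a deleted edge $uv$ is handled \emph{subtractively}, by erasing every sub-box of $u$'s blob that touches $v$'s blob, and the heart of that argument is showing this erasure neither disconnects $u$'s blob nor destroys its other contacts; a contraction is handled by merging two touching blobs into one. You instead invoke the branch-set (minor-model) characterization of minors and build the representation of $H$ \emph{constructively} in one shot: a sparse skeleton consisting of block centers, face-center pairs at contacts internal to a branch set, and one bridging face-center pair per edge of $H$. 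Both arguments ultimately rest on the same geometric fact about the factor-$3$ scaling --- the only possible face contact between sub-boxes of distinct blocks is the pair of opposing face-center sub-boxes, so all other interfaces act as buffers --- but the mechanisms differ. Your version buys two things: it avoids the somewhat delicate ``deletion preserves connectivity'' step entirely (you only ever include sub-boxes that are face-adjacent to an included block center), and it handles contraction, edge deletion, and vertex deletion uniformly through the branch sets rather than case by case; as a bonus, the skeleton you build has at most $b + 2db + 2|E(H)|$ boxes, i.e.\ $O(d\cdot b)$ rather than nearly $3^d b$, even though it occupies the same scaled grid. The paper's version, in turn, keeps a blob for every original vertex and so reads more directly as ``the scaled representation is robust under minor operations.'' One cosmetic remark: your claim that non-adjacent blocks ``acquire a full layer of empty sub-boxes between them'' is not literally true for diagonally touching blocks, but this is harmless since your later argument correctly observes that diagonal blocks share no face contact at all.
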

\begin{proof}
  Let $H$ be a minor obtained from $G$ by first deleting some edges,
  then deleting isolated vertices, and finally contracting edges.  We
  start with the representation $\Gamma$ of $G$ using $b$ boxes and
  scale it by a factor of~3.  This yields a representation $3\Gamma$
  using $3^db$ boxes.  Then we modify~$3\Gamma$, without adding boxes,
  to represent the minor~$H$.  For convenience, we consider the 2D
  case; the case $d > 2$ works analogously.


  Let $uv$ be an edge in $G$ that is deleted.  In $3\Gamma$ we delete
  every pixel in the representation of $u$ that touches a pixel of the
  representation of $v$.  We claim that this neither destroys the
  contact of $u$ with any other vertex nor does it disconnect the
  shape representing $u$.  Consider a single pixel~$B$
  in~$\Gamma$.  In  
  $3\Gamma$ it is represented by a square of $3\times 3$ pixels
  \emph{belonging} to $B$.  If $B$ is in contact to another pixel $A$ in
  $\Gamma$, then there is a pair of pixels~$A'$ and~$B'$ in~$3\Gamma$
  such that $A'$ and $B'$ are in contact, while all other pixels that
  touch $A'$ and $B'$ belong to $A$ and $B$, respectively; see
  Figs.~\ref{fig:minor}a and~\ref{fig:minor}b.  Assume that we remove
  in~$3\Gamma$ all pixels belonging to $B$ that are in contact to pixels
  belonging to another pixel~$C$ touching~$B$ in~$\Gamma$; see
  Fig.~\ref{fig:minor}c.  Obviously, this does not effect the contact
  between~$A'$ and~$B'$.  Moreover, the remaining pixels belonging to
  $B$ form a connected blob.  The above claim follows immediately.

  Removing isolated vertices can be done by simply removing their
  representation.  Moreover, contracting an edge $uv$ into a vertex
  $w$ can be done by merging the blobs representing $u$ and $v$ into a
  single blob representing $w$.  This blob is obviously connected and
  touches the blob of another vertex if and only if either $u$ or $v$
  touch this vertex.
\end{proof}

Now let $G$ be a $k$-outerplanar graph.  Applying the algorithm of
Dolev~\etalii~\cite{dlt-pepg-84} yields an orthogonal drawing of total
length $O(wn)$, where~$w$ is the \emph{width} of~$G$.  The width~$w$
of~$G$ is the maximum number of vertices contained in a shortest 
path from an arbitrary vertex of~$G$ to a vertex on the outer face.
Given the orthogonal drawing, 
Lemma~\ref{lem:drawing-to-box-representation} gives us a pixel
representation of~$G$.  There are, however, two issues.
 First, $k$ and $w$ are not the same (e.g.,
subdividing edges increases $w$ but not $k$).  Second, $G$ does not
have maximum degree~4, thus we cannot simply apply the algorithm of
Dolev~\etalii~\cite{dlt-pepg-84}.

Concerning the first issue, we note that the algorithm of Dolev~\etalii\ exploits that~$G$ has width~$w$ only to find a
special type of separator~\cite[Theorem 1]{dlt-pepg-84}.
For this, it is sufficient that $G$ is a subgraph of a graph of width~$w$
 (not necessarily with maximum degree~4; in fact Dolev~\etalii\ triangulate
 the graph before finding the separator).

\begin{lemma}
  \label{lem:k-outerplanar-and-width}
  Every $k$-outerplanar graph has a planar supergraph of width $w =
  k$.
\end{lemma}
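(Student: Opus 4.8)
The plan is to fix a $k$-outerplane embedding of $G$ and augment it with ``radial'' edges that connect each layer to the next one outward, so that every vertex acquires a short path to the outer face. Write $L_1,\dots,L_k$ for the layers of the peeling, i.e.\ $L_1$ is the set of vertices on the outer face, and $L_i$ is the set of vertices on the outer face of $G_i:=G-(L_1\cup\dots\cup L_{i-1})$. By the definition of $k$-outerplanarity these $k$ sets partition $V(G)$ and $L_k\neq\emptyset$.

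For the upper bound I would ensure, for every $i\ge 2$, that \emph{every} vertex of $L_i$ is adjacent to some vertex of $L_{i-1}$ in the augmented graph $G'$. Granting this, a vertex $v\in L_i$ reaches $L_1$ by following such edges through $L_{i-1},L_{i-2},\dots,L_1$, giving a path to the outer face on at most $i\le k$ vertices; hence $G'$ has width at most $k$. For the lower bound I would use the standard property of the peeling layering that the endpoints of every edge lie in the same or in consecutive layers (removing the outer layer can expose a neighbour of an outer vertex by at most one level). Adding only radial edges, which join consecutive layers, preserves this property in $G'$, so along any path the layer index changes by at most one per edge; since $L_k\neq\emptyset$, a path from a layer-$k$ vertex to the outer face must meet every layer and thus has at least $k$ vertices. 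Together these bounds give width exactly $k$.

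The heart of the proof, and the step I expect to be the main obstacle, is constructing the radial edges while keeping the graph planar and \emph{without changing the layering}. I would add them one at a time. Fix $i\ge 2$ and a vertex $v\in L_i$ not yet adjacent to $L_{i-1}$. Since $v$ lies on the outer face of $G_i=G_{i-1}-L_{i-1}$, this outer face arises from the outer face of $G_{i-1}$ by merging in those faces of $G_{i-1}$ that are incident to deleted ($L_{i-1}$) vertices; as $v$ is on the boundary of this merged region but $v\notin L_{i-1}$, some face $f$ of $G_{i-1}$ is incident both to $v$ and to a vertex $u\in L_{i-1}$. I would then insert the edge $uv$ into $f$. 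This keeps the drawing planar, and because $f$ is an interior face of $G$ the insertion affects neither the outer face of $G$ nor, after peeling $L_1,\dots,L_{i-2}$, the outer boundary of $G_{i-1}$; hence the layers $L_1,\dots,L_k$ are unchanged. Repeating until no eligible vertex remains yields the desired planar supergraph $G'$.

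The routine points that would still need checking are that the new edges for a fixed $i$, and across different $i$, can be realised simultaneously without crossings (each is confined to an interior face and can be rerouted as a fan when several vertices of $L_i$ share a face $f$), together with the two ``folklore'' facts invoked above, namely that an edge insertion into an interior face preserves the layers under peeling, and that edge endpoints differ by at most one layer. Neither appears to present a genuine difficulty beyond careful bookkeeping with the embedding.
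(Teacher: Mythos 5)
Your construction is correct, and it is best viewed as a sparse variant of the paper's own proof rather than a genuinely different route. The paper fixes the same peeling and triangulates every interior face, always fanning out of the vertex $u$ of that face whose deletion phase $k_u$ is minimal; in the triangulated graph, the standard sequence of face-sharing vertices from any vertex down to the outer face then becomes a path, giving width $k$. You instead add only one ``radial'' edge per deficient vertex, anchored at a vertex of the previous layer, which yields a much sparser supergraph; the price is that you must argue that each insertion preserves the layering exactly, whereas the paper only needs that its triangulation never increases any $k_x$. Both arguments rest on the identical key fact (every vertex of layer $i\ge 2$ shares an interior face with a vertex of layer $i-1$) and, more importantly, on the identical anchoring condition for the inserted edges --- and that condition is exactly what your write-up treats too lightly.

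The genuine weak spot is your first ``folklore fact'': it is \emph{false} that inserting an edge into an interior face preserves the layers under peeling. Counterexample: let $G$ consist of a hexagon $v_1\dots v_6$, a quadrilateral $abcd$ drawn inside it, and the single edge $v_1a$. This graph is $2$-outerplane, with $L_2=\{a,b,c,d\}$, and the region between the quadrilateral and the hexagon is one interior face incident to all of $a,b,c,d$. Inserting the chord $ac$ into this face, routed around the side of $d$, leaves $d$ incident only to faces that avoid the hexagon; after phase $1$ the vertex $d$ is no longer on the outer face and moves to layer $3$, so the supergraph is $3$-outerplane. Hence an arbitrary chord of an interior face can increase exactly the quantities you need to control. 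What saves \emph{your} edges is not that $f$ is interior, but that each chord has an endpoint $u\in L_{i-1}$, a minimal-layer vertex of $f$: both sub-faces created by the chord remain incident to $u$, so every vertex of $f$ is still exposed in phase $i$; equivalently, the chord disappears the moment $u$ is peeled, so the peeled graphs from phase $i$ onwards coincide with those of $G$. Your in-line argument covers phases $1,\dots,i-1$ only, and the step ``hence the layers $L_1,\dots,L_k$ are unchanged'' needs exactly this one extra sentence; it cannot be outsourced to the general folklore claim. This anchoring condition is precisely what the paper builds in by triangulating each face from its minimal-$k_u$ vertex.
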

\begin{proof}
  Let $G$ be a graph with a $k$-outerplane embedding.  Iteratively
  deleting the vertices on the outer face gives us a sequence of
  deletion phases.  For each vertex~$v$, let $k_v$ be the phase in
  which $v$ is deleted.  Note that the maximum over all values
  of~$k_v$ is exactly~$k$.  For any vertex~$v$,
 either  $k_v = 1$ or there is a vertex~$u$ with $k_u=k_v-1$
 such that~$u$ and~$v$ are incident to a common face.
Thus, there is a
  sequence $v_1, \dots, v_{k_v}$ of $k_v$ vertices such that (i)~$v_1
  = v$, (ii)~$v_{k_v}$ lies on the outer face, and (iii)~$v_i$,
  $v_{i+1}$ are incident to a common face.  If the graph $G$ was
  triangulated, this would yield a path containing $k_v$ vertices from
  $v$ to a vertex on the outer face.  Thus, triangulated
  $k$-outerplanar graphs have width $w = k$.

  It remains to show that $G$ can be triangulated without
  increasing~$k_v$ for any vertex~$v$.  Consider a face $f$ and let
  $u$ be the vertex incident to $f$ for which $k_u$ is minimal.  Let
  $v\not=u$ be any other vertex incident to~$f$.  Adding the edge~$uv$
  clearly does not increase the value~$k_x$ for any vertex $x$.
  We add edges in this way until the graph is triangulated.
  Alternatively, we can use a result of Biedl~\cite{b-tkog-13} to
  triangulate~$G$.  Note that we do not need to triangulate the outer
  face of~$G$.  Hence, we do not increase the outerplanarity.
\end{proof}

To solve the second issue (the $k$-outerplanar graph $G$ not having
maximum degree~4), we construct a graph $G'$ such that $G$ is a minor
of $G'$, $G'$ is $k$-outerplanar, and $G'$ has maximum degree~4.  Then,
(due to Lemma~\ref{lem:k-outerplanar-and-width}) we can apply the
algorithm of Dolev~\etalii~\cite{dlt-pepg-84} to~$G'$.  Next, we
apply Lemma~\ref{lem:drawing-to-box-representation} to the resulting
drawing to get a representation of~$G'$ with $O(kn)$ pixels.
As~$G$ is a minor of~$G'$, Lemma~\ref{lem:minor} yields a 
representation of~$G$ that, too, requires $O(kn)$ pixels.

\begin{theorem}
  \label{thm:2d-upper-bound}
  Every $k$-outerplanar $n$-vertex graph has a size $O(kn)$ pixel representation.
\end{theorem}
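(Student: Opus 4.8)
The plan is to follow the road map laid out just before the statement: reduce the arbitrary $k$-outerplanar graph $G$ to a bounded-degree graph $G'$ that still has $G$ as a minor, draw $G'$ with Dolev \etalii's algorithm, convert that drawing into a pixel representation with Lemma~\ref{lem:drawing-to-box-representation}, and finally pull the representation back to $G$ via Lemma~\ref{lem:minor}. The only ingredient not yet supplied is the construction of $G'$, so that is where most of the work goes.

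First I would fix a $k$-outerplane embedding of $G$. For every vertex $v$ whose degree $d_v$ exceeds $4$, I replace $v$ by a cycle $C_v = w_1 w_2 \cdots w_{d_v}$ drawn inside a small disk around the former position of $v$, attaching the $i$-th edge formerly incident to $v$ (in the cyclic order around $v$ prescribed by the embedding) to $w_i$. Each $w_i$ then carries two cycle edges and one original edge, hence has degree $3$, while vertices of original degree at most $4$ are left untouched; the resulting graph $G'$ has maximum degree~$4$. Contracting every cycle $C_v$ back to a single vertex recovers $G$, so $G$ is a minor of $G'$. Since $\sum_v d_v = 2m = O(n)$ for a simple planar graph, $G'$ has $O(n)$ vertices and $O(n)$ edges.

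The step I expect to be the main obstacle is verifying that $G'$ is still \emph{$k$-outerplanar}. Drawing $C_v$ in a tiny disk with an empty interior face should not change the iterative peeling depth: each $w_i$ ought to be removed in exactly the phase in which $v$ was removed, because the interior of $C_v$ is an empty face that never contributes a new layer, and seen from outside the disk $C_v$ behaves like the single point $v$ for the purpose of reaching the outer face. I would make this precise by comparing the peeling processes of $G$ and $G'$ layer by layer, arguing that a vertex $u \neq v$ reaches the outer face after $i$ deletions in $G'$ exactly when it does in $G$, and that the whole cycle $C_v$ is peeled off in the phase of~$v$. Since $G$ is already a minor of $G'$, the outerplanarity of $G'$ is at least $k$, so it then equals $k$. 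By Lemma~\ref{lem:k-outerplanar-and-width}, $G'$ therefore has a planar supergraph of width $w = k$, which is precisely the hypothesis that Dolev \etalii's separator argument requires.

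The remaining steps are routine accounting. Applying Dolev \etalii\ to $G'$ yields an orthogonal drawing of total length $O(w \cdot |V(G')|) = O(kn)$; Lemma~\ref{lem:drawing-to-box-representation} converts it into a pixel representation of $G'$ of size $2\ell + |V(G')| - |E(G')| = O(kn)$; and Lemma~\ref{lem:minor} with $d = 2$ turns this into a pixel representation of the minor~$G$ of size at most $3^2 \cdot O(kn) = O(kn)$, which is the claimed bound.
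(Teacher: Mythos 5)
Your overall road map matches the paper's proof (the same three lemmas, the same use of Dolev~\etalii), but your vertex-replacement gadget is different, and that difference is exactly where your proof has a gap. You replace a high-degree vertex $v$ by a \emph{cycle} $C_v$ and claim that every $w_i$ is peeled in exactly the phase $k_v$ in which $v$ was peeled, because ``seen from outside the disk $C_v$ behaves like the single point~$v$.'' This is false. When $v$ is deleted in $G$, \emph{all} faces incident to $v$ merge with the outer region simultaneously. In $G'$, only those cycle vertices incident to a face that has already merged with the outer region by phase $k_v-1$ are deleted in phase $k_v$; the vertices on the far side of $C_v$ are shielded by the rest of the cycle, since the only other face they touch is the empty interior of $C_v$, and that interior merges with the outer region only \emph{after} the first cycle vertices are deleted. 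So those far vertices survive to phase $k_v+1$, and, worse, the faces of $G$ behind them also merge one phase late, so the delay propagates to other vertices and compounds through nested layers. Hence $G'$ need not be $k$-outerplanar, and the layer-by-layer comparison you propose cannot be made precise as stated.

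The gap is repairable in two ways. First, one can show by induction on the peeling phases that the delay grows by at most one phase per layer, so your $G'$ is at most $2k$-outerplanar; this still gives a supergraph of width $O(k)$ via Lemma~\ref{lem:k-outerplanar-and-width} and hence the $O(kn)$ bound, but this factor-two argument is a real piece of work that your sketch does not contain. The paper instead avoids the issue entirely: it replaces $v$ by a \emph{path} $u_1,\dots,u_\ell$ and embeds the path so that all of $u_1,\dots,u_\ell$ are incident to one common face $f$, chosen as a face incident to $v$ that merges with the outer region after exactly $k_v-1$ phases. Since a path (unlike a cycle) can be drawn with all its vertices on a single face, every $u_i$ is exposed at phase $k_v$, the peeling of the rest of the graph is undisturbed, and $G'$ is genuinely $k$-outerplanar. (The cycle gadget you chose is in fact the one the paper uses in Theorem~\ref{thm:bounded-genus}, where the quantity to preserve is the genus rather than outerplanarity; there the cycle is the right choice, here it is not.)
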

\begin{proof}
  Let $G$ be a $k$-outerplanar graph.  After the above considerations,
  it remains to construct a $k$-outerplanar graph $G'$ with maximum
  degree~4 such that $G$ is a minor of $G'$.  Let $u$ be a vertex with
  $\deg(u) > 4$.  We replace $u$ with a path of length
  $\deg(u)$ and connect each neighbor of $u$ to a unique vertex of
  this path.  This can be done maintaining a plane
  embedding.  We now show that the resulting graph
  remains $k$-outerplanar.
  
  \begin{figure}[tb]
    \centering
    \includegraphics[page=1]{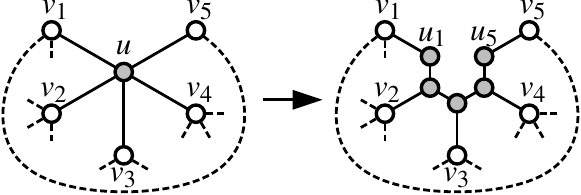}
    \caption{Replacement of high-degree vertices while preserving
      $k$-outerplanarity.}
    \label{fig:k-outerplanar}
  \end{figure}

  Consider a vertex $u$ on the outer face with neighbors $v_1, \dots,
  v_\ell$.  Assume the neighbors appear in that order around $u$ such
  that $v_1$ is the counter-clockwise successor of~$u$ on the outer
  face; see Fig.~\ref{fig:k-outerplanar}.  We replace $u$ with the
  path $u_1, \dots, u_\ell$ and connect $u_i$ to $v_i$ for $1 \le i
  \le \ell$.  Call the resulting graph $G_u$.  Note that all $u_i$ in
  $G_u$ are incident to the outer face.  Thus, if $G$ was
  $k$-outerplanar, $G_u$ is also $k$-outerplanar.  Moreover, the
  degrees of the new vertices do not exceed~4 (actually not even~3),
  and $G$ is a minor of $G_u$---one can simply contract the
  inserted path to obtain $G$.

  We can basically apply the same replacement if $u$ is not incident
  to the outer face.  Assume that we delete $u$ in phase $k_u$ if we
  iteratively delete vertices incident to the outer face.  When
  replacing $u$ with the vertices $u_1, \dots, u_\ell$, we have to
  make sure that all these vertices get deleted in phase $k_u$.  Let
  $f$ be a face incident to $u$ that is merged with the outer face
  after $k_u - 1$ deletion phases (such a face must exist, otherwise
  $u$ is not deleted in phase $k_u$).  We apply the same replacement
  as for the case where $u$ was incident to the outer face, but this
  time we ensure that the new vertices $u_i$ are incident to the
  face~$f$.  Thus, after $k_u - 1$ deletion phases they are all
  incident to the outer face and thus they are deleted in phase $k_u$.
  Hence, the resulting graph $G_u$ is $k$-outerplanar.  Again the new
  vertices have degree at most~3 and $G$ is obviously a minor of
  $G_u$.  Iteratively applying this kind of replacement for ever
  vertex $u$ with $\deg(u) > 4$ yields the claimed graph $G'$.

  The corresponding drawing can then be obtained as follows.  Since
  $G'$ has a supergraph of width $w=k$ by
  Lemma~\ref{lem:k-outerplanar-and-width}, and $G'$ has
  maximum degree~4, we use the algorithm of
  Dolev~\etalii~\cite{dlt-pepg-84} to obtain a drawing of~$G'$ with
  area (and hence total edge length) $O(nk)$.  By
  Lemma~\ref{lem:drawing-to-box-representation}, we thus obtain a
  representation of $G'$ with $O(nk)$ pixels.  Since
  $G$ is a minor of $G'$, Lemma~\ref{lem:minor} yields a
  representation of $G$ with $O(nk)$ pixels.
\end{proof}

\medskip

\section{Representations in 3D}
\label{sec:repr-3d-space}

In this section, we consider voxel representations.  We start with
some basic considerations showing that every $n$-vertex graph admits a
representation with $O(n^2)$ voxels.  Note that $\Omega(n^2)$ is
obviously necessary for~$K_n$ as every edge corresponds to a 
face-to-face contact and every voxel has at most $6$ such contacts.
We improve on this simple general result in two ways.
First, we show that $n$-vertex graphs with treewidth at most~$\tw$ admit 
voxel representations of size $O(n\cdot \tw)$
(see Section~\ref{sec:graphs-with-bounded-treew}).  Second, for $n$-vertex graphs
with genus at most $g$, we obtain representations with
$O(g^2n\log^2 n)$ voxels (see Section~\ref{sec:graphs-with-bounded-genus}).


\begin{theorem}
  \label{thm:3d-general}
  Any $n$-vertex graph admits a voxel representation of size $O(n^2)$.
\end{theorem}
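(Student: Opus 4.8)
The plan is to prove the bound in two moves. First I would build a voxel representation of the \emph{complete} graph $K_n$ with $O(n^2)$ voxels. Every $n$-vertex graph $G$ is then obtained almost for free: since $G$ is a (spanning) subgraph of $K_n$, it is in particular a minor of $K_n$, so Lemma~\ref{lem:minor} with $d=3$ turns a size-$O(n^2)$ representation of $K_n$ into a representation of $G$ of size at most $3^3\cdot O(n^2)=O(n^2)$. Thus the whole theorem reduces to representing $K_n$, and there the deletion step of Lemma~\ref{lem:minor} will remove exactly the contacts corresponding to the non-edges of $G$.

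For $K_n$ I would use a two-layer ``rows-and-columns'' layout. Identify the vertices with $1,\dots,n$. In the plane $z=0$ I place for each vertex $i$ a horizontal rod $r_i=\{(x,2i,0):0\le x\le 2n\}$, and in the plane $z=1$ a vertical rod $c_i=\{(2i,y,1):0\le y\le 2n\}$. The two rods of the same vertex are joined into one blob by the face-contact of the voxels $(2i,2i,0)\in r_i$ and $(2i,2i,1)\in c_i$, so each vertex is a connected blob. Distinct vertices occupy distinct rows in $z=0$ (even coordinate $2i$) and distinct columns in $z=1$, hence the blobs are pairwise disjoint and no two rods in the same plane are face-adjacent. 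The only inter-vertex contacts are therefore the vertical ones between a row in $z=0$ and a column in $z=1$: rods $r_i$ and $c_j$ meet at the crossing $(2j,2i)$, where $(2j,2i,0)\in r_i$ and $(2j,2i,1)\in c_j$ share a face. Because $r_i$ spans all $x\in[0,2n]$ and $c_j$ all $y\in[0,2n]$, this crossing exists for \emph{every} ordered pair $(i,j)$, so any two vertices touch and the construction represents exactly $K_n$. It uses $n$ rods of length $2n+1$ in each of two planes, i.e.\ $O(n^2)$ voxels.

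Combining the two steps gives the claim. The construction itself is routine; the one idea that makes everything work is to aim for $K_n$ rather than for an arbitrary target graph. A direct realization of a prescribed $G$ is awkward, since suppressing the contact of a single non-edge $\{i,j\}$ would require ``punching a hole'' into the middle of a rod, which would disconnect it. Building the complete graph sidesteps this entirely---all crossings are wanted---and Lemma~\ref{lem:minor} then performs the deletions cleanly when passing to the subgraph $G$. So I expect the only real content to be verifying that the $K_n$ layout has no unintended contacts and is connected per vertex; the matching lower bound $\Omega(n^2)$ for $K_n$ was already noted before the statement, as every voxel has at most six faces.
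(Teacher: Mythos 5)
Your proof is correct, and it uses the same rows-and-columns skeleton as the paper: one rod per vertex in each of two horizontal layers, with the two rods of a vertex joined near the diagonal. Where the two routes genuinely diverge is in how the prescribed edge set is enforced. You place the layers at $z=0$ and $z=1$, so that every crossing is a face contact; this realizes $K_n$, and you then pass to an arbitrary $G$ by viewing it as a minor of $K_n$ and invoking Lemma~\ref{lem:minor} with $d=3$, paying a factor $3^3=27$. The paper instead separates the layers by an empty gap layer: the rods lie at $z=0$ and $z=2$, each vertex gets one private connector voxel at $z=1$ on the diagonal, and since all $z=1$ voxels used have both coordinates even, the base representation has \emph{no} inter-vertex contacts at all. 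An edge $v_iv_j$ is then realized additively, by assigning the single voxel at $(2i,2j,1)$ to the blob of $v_i$. This ``opt-in'' mechanism is self-contained (no appeal to Lemma~\ref{lem:minor}), costs only one extra voxel per edge, and avoids the constant-factor blow-up; your ``opt-out'' mechanism buys a clean reduction of the whole theorem to the single graph $K_n$ and reuses a lemma the paper needs elsewhere anyway. One remark: your closing claim that a direct realization of a prescribed $G$ would be awkward --- because suppressing a non-edge means punching a hole in a rod --- is precisely the difficulty the gap-layer trick dissolves: with the layers separated, no contact exists until you choose to create it, so nothing ever needs to be punched out.
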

\begin{proof}
  Let $G$ be a graph with vertices $v_1, \dots, v_n$.
  Vertex~$v_i$ ($i=1,\dots,n$) is represented by three cuboids (see
  Fig.~\ref{fig:3-general}a), namely a vertical cuboid consisting of
  the voxels centered at the points $(2i, 2, 0), (2i, 3, 0), \dots,
  (2i, 2n, 0)$, a horizontal cuboid consisting of the voxels centered
  at $(2, 2i, 2), (3, 2i, 2), \dots, (2n, 2i, 2)$, and the voxel
  centered at $(2i, 2i, 1)$. This yields a representation where every
  vertex is a connected blob and no two
  blobs are in contact.  Moreover, for every pair of vertices~$v_i$
  and~$v_j$, there is a voxel of $v_i$ at $(2i, 2j, 0)$ and a voxel of
  $v_j$ at $(2i, 2j, 2)$ and no voxel between them at $(2i, 2j, 1)$.
  Thus, one can easily represent an arbitrary edge $(v_i, v_j)$ by
  extending the representation of $v_i$ to also contain $(2i, 2j, 1)$;
  see Fig.~\ref{fig:3-general}b.  Clearly, this representation
  consists of $O(n^2)$ voxels.
\end{proof}

\begin{figure}[tb]
  \centering
  \includegraphics{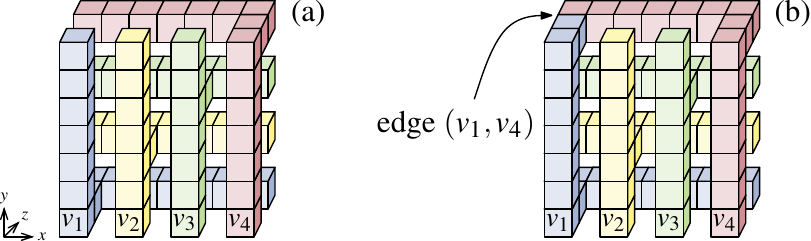}
  \caption{(a)~The basic contact representation without any contacts
    between vertices.  (b)~If~$v_1$ and~$v_4$ are adjacent,
    it suffices to add a single voxel to the representation of~$v_1$
    (or to that of~$v_4$).
  }
  \label{fig:3-general}
\end{figure}


\subsection{Graphs of Bounded Treewidth}
\label{sec:graphs-with-bounded-treew}

Let $G = (V, E)$ be a graph.  A \emph{tree decomposition} of $G$ is a
tree $T$ where each node~$\mu$ in $T$ is associated with a \emph{bag} $X_\mu
\subseteq V$ such that: (i) for each $v \in V$, the nodes of $T$ whose
 bags contain $v$ form a connected subtree, and (ii) for each edge $uv \in E$,
 $T$ contains a node $\mu$ such that $u, v\in X_\mu$.

%
Note that we use (lower case) Greek letters for the nodes of~$T$ to
distinguish them from the vertices of $G$.  The \emph{width} of the
tree decomposition is the maximum bag size minus~$1$.  The
\emph{treewidth} of~$G$ is the minimum width over all
tree decompositions of~$G$.
A tree decomposition is \emph{nice} if $T$ is a rooted binary tree,
 where for every node $\mu$:
%
\begin{itemize}
\item $\mu$ is a leaf and $|X_\mu| = 1$ (\emph{leaf node)}, or
\item $\mu$ has a single child $\eta$ with $X_\mu \subseteq X_\eta$
  and $|X_\mu| = |X_\eta| - 1$ (\emph{forget node}), or
\item $\mu$ has a single child $\eta$ with $X_\eta \subseteq X_\mu$
  and $|X_\mu| = |X_\eta| + 1$ (\emph{introduce node}), or
\item $\mu$ has two children $\eta$ and $\kappa$ with $X_\mu = X_\eta
  = X_\kappa$ (\emph{join node}).
\end{itemize}
Any tree decomposition can be transformed (without increasing its
width) into a nice tree decomposition such that the resulting tree $T$
has $O(n)$ nodes, where $n$ is the number of vertices of
$G$~\cite{b-t-97}.  This transformation can be done in linear time.
Thus, we can assume any tree decomposition to be a nice tree
decomposition with a tree of size $O(n)$.

\begin{lemma}
  \label{lem:tree-decomp-star}
  Let $T$ be a nice tree decomposition of a graph $G$.  The edges of
  $G$ can be mapped to the nodes of $T$ such that every edge $uv$ of
  $G$ is mapped to a node $\mu$ with $u, v \in X_\mu$ and the edges
  mapped to each node $\mu$ form a star.
\end{lemma}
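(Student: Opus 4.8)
The plan is to use the forget nodes of the nice tree decomposition to steer each edge toward one of its endpoints. First I would arrange, without loss of generality, that the root bag is empty; appending a path of forget nodes above the root achieves this while keeping the decomposition nice and of size $O(n)$. With an empty root bag, every vertex $w$ of $G$ lies in a connected subtree $S_w$ of $T$ whose topmost node $t_w$ has a parent that forgets $w$; call this forget node $\phi(w)$. Since each forget node removes exactly one vertex and every vertex is forgotten exactly once, the map $w \mapsto t_w$ is injective, which is the fact that will ultimately enforce the star structure.

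Next I would define the assignment. Fix an edge $uv$. The nodes whose bags contain $u$ form the connected subtree $S_u$, and likewise $S_v$; by property~(ii) of tree decompositions $S_u \cap S_v \neq \emptyset$, and the intersection of two subtrees of a tree is again a subtree, so it has a unique topmost node $\mu_{uv}$, which satisfies $u,v \in X_{\mu_{uv}}$. I would assign $uv$ to $\mu_{uv}$; this immediately yields the first required property. The key structural claim is that $\mu_{uv} \in \{t_u, t_v\}$. Indeed, in a rooted tree the minimum-depth node of a connected subtree is an ancestor of all its other nodes, so both $t_u$ and $t_v$ are ancestors of (or equal to) $\mu_{uv}$. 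If both were \emph{strict} ancestors, then the parent of $\mu_{uv}$ would lie on the path up to $t_u$ inside the connected subtree $S_u$ and on the path up to $t_v$ inside $S_v$, hence would contain both $u$ and $v$, contradicting the maximality of $\mu_{uv}$ in $S_u \cap S_v$. Therefore $\mu_{uv}$ equals $t_u$ or $t_v$.

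It then remains to verify the star property, which follows cleanly from injectivity of $w \mapsto t_w$. Every node $\mu$ that receives at least one edge is of the form $t_w$ for a \emph{unique} vertex $w$, and every edge $uv$ assigned to $\mu = t_w$ satisfies $\mu = t_u$ or $\mu = t_v$, which by injectivity forces $w \in \{u,v\}$. Thus all edges mapped to $\mu$ contain the common vertex $w$ and hence form a star centered at $w$, while nodes not of the form $t_w$ receive no edge at all. The main obstacle is the structural claim $\mu_{uv} \in \{t_u, t_v\}$: once it is established the star property is essentially bookkeeping, but the argument genuinely relies on the connectivity of the bag-subtrees and on the empty-root-bag normalization, which must be set up first so that every vertex is actually forgotten and each $t_w$ is well defined.
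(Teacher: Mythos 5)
Your core construction is sound and genuinely different from the paper's: you assign each edge $uv$ to the topmost node of $S_u \cap S_v$ and derive the star property from injectivity of the forget-point map $w \mapsto t_w$, whereas the paper runs a bottom-up induction over $T$ and centers the star at each \emph{introduce} node on the vertex being introduced there. Your structural claim $\mu_{uv} \in \{t_u, t_v\}$ is correct (otherwise the parent of $\mu_{uv}$ would lie in both $S_u$ and $S_v$), and so is the injectivity argument: in a nice decomposition the parent of $t_w$ can be neither an introduce node nor a join node (its bag would still contain $w$), hence it is a forget node, and a forget node drops exactly one vertex.

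There is, however, a real gap at the root. The lemma asks for a mapping into the nodes of the \emph{given} decomposition $T$, and your ``without loss of generality'' is not one: after appending the forget chain $\nu_1, \dots, \nu_r$ above the old root $\rho$, an edge $x_ix_j$ between two root-bag vertices is mapped to the chain node where its earlier-forgotten endpoint tops out, i.e.\ to a node of $T'$ that does not exist in $T$ (only edges incident to the first-forgotten vertex land on $\rho$ itself). This is not a removable artifact of your scheme: without the normalization, every edge of $G[X_\rho]$ has its topmost co-occurrence exactly at $\rho$, so all of them would be mapped to $\rho$, and $G[X_\rho]$ may be a clique rather than a star. Redistributing those edges back into $T$ while keeping the star property essentially requires the paper's induction (the root of a nice decomposition is an introduce, forget, or join node, so some pairs of $X_\rho$ must co-occur strictly below $\rho$, and one recurses). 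The damage is limited --- what you do prove, namely the statement for the enlarged nice decomposition $T'$, which still has $O(n)$ nodes, serves the application in Theorem~\ref{th:3d-treewidth} just as well --- but as a proof of the lemma as stated, the case of a non-empty root bag is missing.
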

\begin{proof}
  We say that a node $\mu$ \emph{represents} the edge $uv$ if $uv$ is
  mapped to $\mu$.  Consider a node $\mu$ during a bottom-up traversal
  of $T$.  We want to maintain the invariant that, after
  processing $\mu$, all edges between vertices in $X_\mu$ are
  represented by~$\mu$ or by a descendant of~$\mu$.  This ensures that
  every edge is represented by at least one node.  Every edge can then
  be mapped to one of the nodes representing it.

  If $\mu$ is a leaf, it cannot represent an edge as $|X_\mu| = 1$.
  If $\mu$ is a forget node, it has a child $\eta$ with $X_\mu
  \subseteq X_\eta$.  Thus, by induction, all edges between vertices
  in $X_\mu$ are already represented by descendants of $\mu$.  If
  $\mu$ is an introduce node, it has a child $\eta$ and $X_\mu =
  X_\eta \cup \{u\}$ for a vertex $u$ of $G$.  By induction, all edges
  between nodes in $X_\eta$ are already represented by descendants of
  $\mu$.  Thus, $\mu$ only needs to represent the edges between the
  new node $u$ and other nodes in $X_\mu$.  Note that these edges form
  a star with center $u$.  Finally, if $\mu$ is a join node, no edge
  needs to be represented by $\mu$ (by the same argument as for forget
  nodes).  This concludes the proof.
\end{proof}

We obtain a small voxel representation roughly as follows.  We start
with a ``2D'' voxel representation of the tree $T$, that is, all voxel
centers lie in the $x$--$y$ plane.  We take $\tw + 1$ copies of this
representation and place them in different layers in 3D space.  We
then assign to each vertex $v$ of $G$ a piece of this layered
representation such that its piece contains all nodes of $T$ that
include $v$ in their bags.  For an edge $uv$, let $\mu$ be the node to
which $uv$ is mapped by Lemma~\ref{lem:tree-decomp-star}.  By
construction, the representation of $\mu$ occurs multiple times
representing $u$ and $v$ in different layers.  To represent $uv$, we
only have to connect the representations of~$u$ and~$v$.  As it
suffices to represent a star for each node $\mu$ in this way, the
number of voxels additionally used for these connections is small.

\begin{theorem}
  \label{th:3d-treewidth}
  Any $n$-vertex graph of treewidth~\tw\ has a voxel representation
  of size $O(n\tw)$.
\end{theorem}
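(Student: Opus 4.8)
The plan is to realize the construction sketched in the paragraph preceding the theorem, so I will build a layered voxel representation driven by a nice tree decomposition. First I would invoke the fact stated earlier that $G$ admits a nice tree decomposition $T$ of width $\tw$ with $O(n)$ nodes. I would fix a planar ``2D'' voxel layout of the tree $T$ itself, placing all voxel centers in the $x$--$y$ plane so that each node $\mu$ of $T$ occupies a small gadget and the tree edges are realized by connecting blobs; since $T$ is a tree with $O(n)$ nodes, a straightforward recursive embedding (for instance along a balanced separator or simply a rooted layout) uses $O(n)$ voxels and $O(n)$ area. I would then take $\tw+1$ congruent copies of this layout and stack them into $\tw+1$ horizontal layers at $z=1,\dots,\tw+1$, so the whole scaffold occupies $O(n\tw)$ voxels.

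Next I would assign vertices of $G$ to slots in this stack. Because the bag size is at most $\tw+1$, I can fix at each node $\mu$ a bijection between the vertices in $X_\mu$ and the $\tw+1$ layers, subject to the consistency condition that along any tree edge the shared vertices keep their layer. Concretely, the nodes of $T$ containing a fixed vertex $v$ form a connected subtree $T_v$ (tree-decomposition property~(i)), so I would route $v$ through a single chosen layer along $T_v$: the piece of layer $\ell$ lying over the subtree $T_v$ becomes (part of) the blob of $v$. The key bookkeeping step is choosing the layer assignment consistently across the nice-decomposition operations---leaf, introduce, forget, and join nodes---so that each $v$ occupies one coherent connected blob. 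Introduce and forget nodes change the bag by one vertex and are handled by entering or leaving a layer; the delicate case is the join node, where the two children share the same bag and I must ensure both subtrees assign $X_\mu$ to the same layers so that the blobs glue together without collisions.

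I would then realize the edges. For an edge $uv$, let $\mu$ be the node to which $uv$ is mapped by Lemma~\ref{lem:tree-decomp-star}; at $\mu$ the two vertices $u,v$ appear in two specific layers, and over the gadget of $\mu$ their blobs are vertically separated copies of the same planar shape. To create the required face-to-face contact I add a short vertical bridge of $O(\tw)$ voxels connecting the layer of $u$ to the layer of $v$ at the location of $\mu$'s gadget. Since the edges mapped to any single node $\mu$ form a star with some center $w$ (Lemma~\ref{lem:tree-decomp-star}), all these bridges emanate from the single layer holding $w$, so they can be arranged without mutual interference and the total added cost at $\mu$ is $O(\tw)$ voxels. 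Summing over the $O(n)$ nodes of $T$ gives $O(n\tw)$ voxels for all edge gadgets, matching the scaffold cost.

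The main obstacle, I expect, is the simultaneous geometric routing and the ``only-if'' direction of correctness: I must guarantee that the blobs assigned to distinct vertices are pairwise disjoint, that each blob is connected (which follows from $T_v$ being a connected subtree, but needs the layer assignment to be truly consistent across join nodes), and---most importantly---that \emph{no unintended} face-to-face contacts are created. Ensuring the absence of spurious contacts is what forces the scaffold to be spread out (scaled) enough that blobs in the same layer over different tree nodes stay apart, and that vertical bridges only touch their intended endpoints; this is precisely where the factor $\tw+1$ in the number of layers and an $O(1)$ scaling of the planar tree layout must be chosen carefully. Once disjointness and exactness are argued, the size bound $O(n\tw)$ is immediate from the two $O(n\tw)$ contributions above.
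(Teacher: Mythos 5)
Your scaffold (a 2D pixel representation of the nice tree decomposition $T$, stacked in $\tw+1$ layers, with each vertex $v$ occupying its layer $c(v)$ over the connected subtree $T_v$, and edges realized per node via the star structure of Lemma~\ref{lem:tree-decomp-star}) is exactly the paper's construction. The genuine gap is in how you realize the edges: you assert that the vertical bridges ``only touch their intended endpoints'' and that this can be ensured by an $O(1)$ scaling of the layout, but this cannot work as stated. A vertical bridge at node $\mu$ from the star center's layer to the layer of a leaf $v$ of the star must pass through every intermediate layer, and every vertex $v''\in X_\mu$ whose layer lies strictly between them has a blob covering $\mu$'s entire gadget in that layer; the bridge therefore either overlaps that blob or, after reassigning the pierced voxels, is in face-to-face contact with it. No horizontal scaling removes this obstruction, because the contact happens \emph{vertically through} $\mu$'s footprint, not between blobs of different tree nodes. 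The paper's resolution is to embrace these spurious contacts rather than avoid them: it assigns one full column $(x_\mu,y_\mu,1),\dots,(x_\mu,y_\mu,k)$ to the star center, obtaining a representation of a \emph{supergraph} of $G$ in which the center touches every vertex of $X_\mu$, and then invokes Lemma~\ref{lem:minor} (since $G$ is a minor of that supergraph) to delete the unwanted contacts at the cost of a constant factor $3^3$. Your proposal never invokes Lemma~\ref{lem:minor}, and that lemma is precisely the missing tool; without it you would have to re-derive its carving argument inline.

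There is also a quantitative slip in your accounting: with one bridge per star edge, a node whose star has $\Theta(\tw)$ leaves costs $\Theta(\tw)$ voxels \emph{per bridge}, i.e.\ $\Theta(\tw^2)$ per node and $\Theta(n\tw^2)$ overall, which breaks the bound. The only way to pay $O(\tw)$ per node is to merge all bridges into the single shared column---but that is exactly the configuration that maximizes the spurious contacts above, so the column trick and the minor lemma have to be used together. (Your worry about consistency of the layer assignment at join nodes is legitimate but minor: since $T_v$ is connected, assigning colors top-down, when each vertex first appears below the root of its subtree, makes the coloring well defined with $\tw+1$ colors.)
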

\begin{proof}
  Let $G$ be an $n$-vertex graph of treewidth~\tw.
  During our construction, we will get some contacts between the blobs
  of vertices that are actually not adjacent in~$G$.  
  As $G$ is a minor of the graph that we
  represent this way, we can use Lemma~\ref{lem:minor} to get
  a representation of $G$.  Let~$T$ be a nice tree decomposition
  of~$G$.  As a tree, $T$ is outerplanar and, hence, admits
  a pixel representation~$\Gamma$ with $O(n)$ pixels (by
  Theorem~\ref{thm:2d-upper-bound}).  Let~$\Gamma_1,\dots,\Gamma_k$ be
  voxel representations corresponding to~$\Gamma$ with
  $z$-coordinates $1,\dots, k = \tw + 1$.
 
  For a vertex $v$ of $G$, we denote by $\Gamma_i(v)$ the
  sub-representation of $\Gamma_i$ induced by the nodes of $T$ whose
  bags contain $v$.  Now let $c \colon V \to \{1, \dots, k\}$ be a
  $k$-coloring of $G$ with color set $\{1,\dots,k\}$ such
  that no two vertices sharing a bag have the same color.  Such a
  coloring can be computed by traversing $T$ bottom up, assigning in
  every introduce node $\mu$ a color to the new vertex that is not
  already used by any other vertex in~$X_\mu$.  As a basis for our
  construction, we represent each vertex $v$ of $G$ by the
  sub-representation~$\Gamma_{c(v)}(v)$.

  So far, we did not represent any edge of~$G$. Our
  construction, however, has the following properties: (i)~it uses
  $O(nk)$ voxels.  (ii)~every vertex is a connected set of
  voxels.  (iii)~for every node~$\mu$ of~$T$, there is a position 
  $(x_\mu, y_\mu)$ in the plane such that, for every vertex $v \in X_\mu$, 
  the voxel at $(x_\mu, y_\mu, c(v))$ belongs to the representation of~$v$.
  Scaling the representation by a factor of~$2$ ensures that
  this is not the only voxel for~$v$ and that $v$ is not
  disconnected if this voxel is removed (or reassigned to
  another vertex).

  By Lemma~\ref{lem:tree-decomp-star} it suffices to represent for
  every node $\mu$ edges between vertices in $X_\mu$ that form a star.
  Let $u$ be the center of this star.  We simply assign the voxels
  centered at $(x_\mu, y_\mu, 1), \dots, (x_\mu, y_\mu, k)$ to the
  blob of~$u$.  This creates a contact between $u$ and every other
  vertex $v \in X_\mu$ (by the above property that the voxel $(x_\mu,
  y_\mu, c(v))$ belonged to~$v$ before).  Finally, we apply
  Lemma~\ref{lem:minor} to get rid of unwanted contacts.  The
  resulting representation uses $O(nk)$ voxels, which concludes the
  proof.
\end{proof}

Note that cliques of size $k$ require $\Omega(k^2)$ voxels.  Taking the
disjoint union of $n/k$ such cliques yields graphs with $n$ vertices
requiring $\Omega(nk)$ voxels.  Note that these graphs have treewidth $\tw
= k - 1$.  Thus, the bound
of Theorem~\ref{th:3d-treewidth} is asymptotically tight.

\begin{theorem}
  Some $n$-vertex graphs of treewidth~$\tw$ require 
  $\Omega(n\tw)$ voxels. 
\end{theorem}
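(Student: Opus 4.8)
The plan is to exhibit an explicit family of graphs achieving the lower bound by combining the two ingredients already present in the excerpt: the $\Omega(n^2)$ lower bound for $K_n$ and the observation that treewidth behaves well under disjoint union. First I would fix $\tw$ and set $k = \tw + 1$, so that a clique $K_k$ on $k$ vertices has treewidth exactly $\tw$ (the standard fact that $K_k$ has treewidth $k-1$). By the remark preceding the statement, $K_k$ requires $\Omega(k^2)$ voxels, since every edge is a face-to-face contact, every voxel has at most six faces, and $K_k$ has $\binom{k}{2} = \Theta(k^2)$ edges; hence any voxel representation of $K_k$ uses $\Omega(k^2)$ voxels.

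Next I would take the disjoint union $G$ of $n/k$ copies of $K_k$, so that $G$ has $n$ vertices. The treewidth of a disjoint union is the maximum of the treewidths of its components, so $G$ still has treewidth $\tw = k-1$. The crucial point is that in any voxel representation of $G$, distinct connected components occupy disjoint, mutually non-adjacent blobs, so the voxels used by one copy of $K_k$ cannot be shared with another copy. Consequently the total voxel count is at least the sum over components of the per-component lower bound, namely $(n/k)\cdot\Omega(k^2) = \Omega(nk) = \Omega(n\tw)$.

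The main obstacle, such as it is, is making precise that the per-component lower bounds genuinely add up: one must argue that restricting a representation of $G$ to the voxels of a single component yields a valid representation of $K_k$ that still meets the $\Omega(k^2)$ bound. This follows because the blobs of a component are connected and pairwise in contact exactly according to the edges of $K_k$, and the six-faces-per-voxel counting argument applies verbatim to each component in isolation. I would also note the mild edge case that $\tw$ should be at least~$1$ for the bound to be non-trivial, and that $k \le n$ so that at least one component exists; both are harmless normalizations. Assembling these observations gives $\Omega(n\tw)$ voxels, which together with Theorem~\ref{th:3d-treewidth} shows the bound is asymptotically tight, as already anticipated in the paragraph preceding the statement.
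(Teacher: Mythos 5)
Your proposal is correct and follows exactly the paper's own argument: the paper likewise derives the $\Omega(k^2)$ bound for $K_k$ from the six-faces-per-voxel counting noted at the start of Section~\ref{sec:repr-3d-space}, takes the disjoint union of $n/k$ copies of $K_k$ (which has treewidth $k-1$), and sums the per-component bounds to get $\Omega(n\tw)$. Your additional care about components occupying disjoint, non-adjacent voxel sets and about the edge cases $k\le n$ and $\tw\ge 1$ just makes explicit what the paper leaves implicit.
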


\subsection{Graphs of Bounded Genus}
\label{sec:graphs-with-bounded-genus}

Since planar graphs (genus~$0$) have treewidth $O(\sqrt{n})$~\cite{ft-nubdpg-06},
we can obtain a voxel representation of size $O(n^{1.5})$ for any planar graph,
 from Theorem~\ref{th:3d-treewidth}.
Next, we improve this bound to $O(n \log^2 n)$ by proving
a more general result for graphs of bounded genus.
Recall that we used known results on orthogonal drawings with small
area to obtain small pixel representations in
Section~\ref{sec:upper-bound}. Here we follow a similar
approach (re-using
Lemmas~\ref{lem:drawing-to-box-representation} and~\ref{lem:minor}), 
now allowing the orthogonal drawing we 
start with to be non-planar.

We obtain small voxel representations by first showing that it is
sufficient to consider graphs of maximum degree~4: we replace
higher-degree vertices by connected subgraphs as in the proof of
Theorem~\ref{thm:2d-upper-bound}. Then we use a result of
Leiserson~\cite{l-aeglv-80} who showed that any graph of genus~$g$
and maximum degree~4 admits a 2D orthogonal drawing of area $O((g+1)^2
n \log^2n)$, possibly with edge crossings.  The
area of an orthogonal drawing is clearly an upper bound for its total
edge length.  Finally we turn the pixels into voxels and use the third
dimension to get rid of the crossings without using 
too many additional voxels. 

\begin{theorem}
  \label{thm:bounded-genus}
  Every $n$-vertex graph of genus $g$ admits a voxel representation of
  size $O((g+1)^2 n \log^2n)$.
\end{theorem}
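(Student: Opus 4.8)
The plan follows the roadmap laid out in the paragraph immediately preceding the theorem, so the structure of the proof is essentially prescribed; the work is in stitching the ingredients together and controlling the voxel count at each step. Let me think about what has to happen.

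First I would reduce to the bounded-degree case. Given an $n$-vertex graph $G$ of genus $g$, I replace every vertex $u$ of degree $\deg(u) > 4$ by a path of length $\deg(u)$ and wire each neighbor of $u$ to a distinct vertex of this path, exactly as in the proof of Theorem~\ref{thm:2d-upper-bound}. This produces a graph $G'$ of maximum degree~4 having $G$ as a minor. The key point to verify is that this replacement does not increase the genus: contracting the inserted paths recovers $G$, and one can route the path inside a small neighborhood of the original vertex on the genus-$g$ surface, so $G'$ still embeds on a surface of genus $g$. The new graph $G'$ has $O(n + m) = O((g+1)n)$ vertices — since a genus-$g$ graph has $m = O(n + g)$ edges, and each degree-$d$ vertex contributes $d$ path-vertices, the total vertex count is $O(m) = O(n + g)$, which is absorbed comfortably into the target bound.

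Then I would invoke Leiserson's theorem~\cite{l-aeglv-80}: $G'$, having genus $g$ and maximum degree~4, admits a 2D orthogonal drawing of area $O((g+1)^2 |V(G')| \log^2 |V(G')|) = O((g+1)^2 n \log^2 n)$, possibly with crossings. Since the total edge length of an orthogonal drawing in a $W \times H$ grid is at most its area, the total edge length $\ell$ is also $O((g+1)^2 n \log^2 n)$. The obstacle here, and the genuinely new ingredient of this theorem, is that Leiserson's drawing has \emph{crossings}, whereas Lemma~\ref{lem:drawing-to-box-representation} produces a contact representation faithful to the drawn adjacencies — a crossing would create a spurious contact between two non-adjacent edges. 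So before applying Lemma~\ref{lem:drawing-to-box-representation} I must remove the crossings using the third dimension. I would lift the planar orthogonal drawing into the plane $z=0$ of 3D and, at each crossing of two edges, locally detour one of the two edges through the layer $z=1$ (a small ``bridge'' of constant total length per crossing: up one unit, across, and back down). Because the drawing lives on a grid and each grid point hosts at most one crossing, and the number of crossings is at most quadratic in the number of grid segments, this is where I expect the main difficulty: I must argue that the number of crossings is $O(\ell)$ (not $O(\ell^2)$), so that the total edge length after the detours remains $O((g+1)^2 n \log^2 n)$. This follows because in a unit orthogonal grid drawing each crossing sits at a distinct grid vertex interior to two edges, so the number of crossings is bounded by the number of grid cells the edges pass through, which is $O(\ell)$; each detour adds only $O(1)$ length, so the total edge length of the resulting crossing-free 3D orthogonal drawing stays $O((g+1)^2 n \log^2 n)$.

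Finally I would apply the lemmas. The crossing-free 3D orthogonal drawing of $G'$ has total edge length $\ell = O((g+1)^2 n \log^2 n)$, so by Lemma~\ref{lem:drawing-to-box-representation} (with $d=3$) the graph $G'$ admits a voxel representation of size $2\ell + |V(G')| - |E(G')| = O((g+1)^2 n \log^2 n)$. Since $G$ is a minor of $G'$, Lemma~\ref{lem:minor} (with $d=3$, contributing only the constant factor $3^3 = 27$) yields a voxel representation of $G$ of size $O((g+1)^2 n \log^2 n)$, as claimed. The planar case $g=0$ then gives the advertised $O(n \log^2 n)$ bound as an immediate corollary.
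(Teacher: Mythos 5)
Your proposal follows the paper's roadmap (degree reduction, Leiserson's drawing, crossing removal via the third dimension, then Lemmas~\ref{lem:drawing-to-box-representation} and~\ref{lem:minor}), but your crossing-removal step is genuinely different from the paper's. The paper never touches individual crossings: it subdivides each edge at its bends so that every edge becomes a single horizontal or vertical segment, splits every vertex $v$ into two adjacent copies $v_1$ (at $z=0$) and $v_2$ (at $z=1$), and draws all horizontal edges in the plane $z=0$ and all vertical edges in the plane $z=1$. Since crossings in an orthogonal drawing occur only between a horizontal and a vertical segment, this removes all of them at once, with only a constant-factor increase in total edge length, and no counting argument is needed; $G$ is recovered as a minor by contracting the edges $v_1v_2$ and the subdivision vertices. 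Your per-crossing bridges achieve the same end, and your key quantitative point --- that a unit-grid orthogonal drawing has only $O(\ell)$ crossings, because each crossing occupies a distinct grid point interior to two edges and $m \le \ell$ --- is correct. But your route needs care that you do not supply: a crossing can lie at distance~1 from a vertex or a bend, or at distance~2 from another crossing, so the bridges as described can collide with vertices, bends, or one another. This is fixable (scale the 2D drawing by a constant first, and always lift, say, the horizontal edge of each crossing, so that all bridges are parallel segments in the layer $z=1$), but it has to be argued; the paper's layer-splitting construction is precisely the device that makes this case analysis unnecessary.

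There is also a genuine gap in your bookkeeping. You set $|V(G')| = O(n+m) = O((g+1)n)$ and then assert that Leiserson's bound $O\bigl((g+1)^2 |V(G')| \log^2 |V(G')|\bigr)$ equals $O((g+1)^2 n \log^2 n)$. That equality is false in general: the genus of an $n$-vertex graph can be as large as $\Theta(n^2)$, and when $g = \omega(n)$ your substitution yields $O\bigl((g+1)^3 n \log^2((g+1)n)\bigr)$, which exceeds the claimed bound by a factor of roughly $g$. The paper closes exactly this hole with one observation you omit: one may assume $g \in O(n)$, since otherwise the $O(n^2)$ bound of Theorem~\ref{thm:3d-general} is already within $O((g+1)^2 n \log^2 n)$; and once $g \in O(n)$, Euler's formula gives $m \in O(n)$, hence $|V(G')| \in O(n)$ and $\log|V(G')| \in O(\log n)$, and the arithmetic goes through. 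Add that case distinction and your proof is complete.
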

\begin{proof}
  Let $G$ be an $n$-vertex graph, and let $u$ be a vertex of degree
  $\ell > 4$.  Assume~$G$ to be embedded on a surface of
  genus $g$, and let $v_1, \dots, v_\ell$ be the neighbors of $u$
  appearing in that order around $u$ (with respect to the embedding).
  We replace $u$ with the cycle $u_1, \dots, u_\ell$ and connect~$u_i$
  to~$v_i$ for $1 \le i \le \ell$; see Fig.~\ref{fig:3d-max-deg-4}a.
  Clearly, the new vertices have degree~3 and the genus of the graph
  has not increased.  Applying this modification to every vertex of
  degree at least~5 yields a graph $G_4$ of maximum degree~4 and genus
  $g$.  Moreover, $G$ is a minor of $G_4$ as one can undo the
  cycle replacements by contracting all edges in the cycles.  Thus, we
  can transform a voxel representation of~$G_4$ into a voxel
  representation of~$G$ by applying Lemma~\ref{lem:minor}.

  We claim that the number $n_4$ of vertices in $G_4$ is linear in $n$.
  Indeed, if $m$ denotes the number of edges in $G$, then we have $n_4 \leq n + 2m$.
  Moreover, we can assume without loss of generality that $g \in O(n)$ (otherwise
  Theorem~\ref{thm:3d-general} already gives a better bound).  This
  implies that $m \in O(n)$ and hence, $n_4 \in O(n)$, as we claimed.

  We thus assume that $G$ has maximum degree~4. Then $G$ has a 
  (possibly non-planar) orthogonal drawing $\Gamma$
  of total edge length $O(g^2 n \log^2n)$~\cite{l-aeglv-80}.
  We modify~$G$ and~$\Gamma$ as follows. For every bend on an edge $e$ in $\Gamma$, we
  subdivide the edge $e$ once yielding a partition of the edges 
  of the subdivided graph into horizontal and vertical
  edges. We obtain a graph $G'$ from this subdivision of $G$ by replacing 
  every vertex $v$ by two adjacent vertices $v_1$ and $v_2$, and connecting
  $v_1$ and $w_1$ (respectively $v_2$ and $w_2$) by an edge if $v$ and $w$ are
  connected by a horizontal (respectively vertical edge); see Fig.~\ref{fig:3d-max-deg-4}b.
    

\begin{figure}[tb]
  \centering
  \includegraphics[page=1]{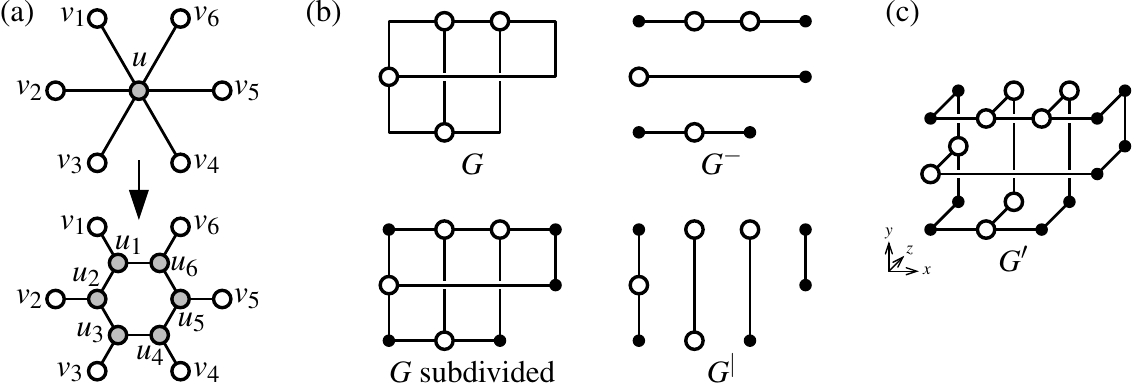}
  \caption{Constructing voxel representations for bounded-genus
    graphs: (a)~replacing high-degree vertices while preserving the
    genus, (b)~subdividing and decomposing a graph according to a
    non-planar orthogonal drawing with small area, and
    (c)~constructing a 3D drawing with small total edge length from
    the decomposition in (b).}
  \label{fig:3d-max-deg-4}
\end{figure}
 
  We draw $G'$ in 3D space by using the drawing $\Gamma$ and setting for 
  every vertex $v$ the $z$-coordinate of $v_1$ and $v_2$ to $0$ and $1$, 
  respectively. The $x$- and $y$-coordinates of vertices and edges
  are the same as in $\Gamma$; see Fig.~\ref{fig:3d-max-deg-4}b.
  Note that $G$ is a minor of~$G'$: we obtain~$G$ from~$G'$ by contracting
  (i)~the edge $v_0v_1$ for every vertex $v$ and (ii)~any subdivision vertex.
  Asymptotically, the total edge
  length of~$\Gamma'$ is the same as that of~$\Gamma$, that is, 
  $O((g+1)^2 n \log^2n)$.
  By Lemma~\ref{lem:drawing-to-box-representation}, we turn~$\Gamma'$ into
  a voxel representation of~$G'$ and, by Lemma~\ref{lem:minor}, into a
  voxel representation of~$G$ with size $O((g+1)^2 n \log^2n)$.
\end{proof}

\medskip

\section{Conclusion}

In this paper, we have studied pixel representations and voxel
representations of graphs, where vertices are represented by disjoint
blobs (that is, connected sets of grid cells) and edges correspond to
pairs of blobs with face-to-face contact.  We have shown that it is
NP-complete to minimize the number of pixels or voxels in such
representations.  Does this problem admit an approximation algorithm?


We have shown that $O((g+1)^2n \log^2n)$ voxels suffice for any
$n$-vertex graph of genus~$g$.  It remains open to improve this upper
bound or to give a non-trivial lower bound.  
We believe that any planar graph admits a voxel representation of
linear size.

\bibliographystyle{abbrv}
\bibliography{contact}

\ifFull
\else

\newpage

\appendix

\section{NP-Completeness of Minimum-Size Pixel and Voxel Representations}
\label{sec:app-NPC}

\begin{backInTimeThm}{thm-pixel-NPC}
\begin{theorem}
  It is NP-complete to minimize the size of a pixel representation of
  a planar graph.
\end{theorem}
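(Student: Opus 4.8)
The plan is to show membership in NP and then NP-hardness by reduction. For membership, I would observe that the decision version---given a graph $G$ and an integer $k$, does $G$ admit a pixel representation of size at most $k$?---can be verified by guessing the set of pixels and checking in polynomial time that each vertex forms a connected blob and that two blobs share an edge-adjacent pair of pixels exactly when the corresponding vertices are adjacent. For NP-hardness, I would reduce from the problem of deciding whether a planar graph of maximum degree~$4$ admits a grid drawing in which every edge has length~$1$; Bhatt and Cosmadakis~\cite{bc-cmwlvl-87} proved this NP-hard, and their construction still works when the angles between consecutive edges around each vertex are prescribed (which also fixes the rotation system up to reflection).

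Given such an instance $G$, I would build an auxiliary planar graph $H$ whose cheapest pixel representations are \emph{rigid}: the idea is to force each vertex and each edge of $G$ into a gadget that has an essentially unique one-pixel-per-vertex drawing. Concretely, I would replace every vertex of $G$ by a five-vertex wheel (a rim $4$-cycle plus a hub) oriented so that the four spokes leave the hub in the four directions dictated by the prescribed angles, and I would subdivide every edge of $G$ once while leaving the spoke edges incident to hubs unsubdivided. The key claim I would then prove is that $G$ has a unit-length grid drawing respecting the angles if and only if $H$ has a pixel representation using exactly one pixel per vertex.

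For the forward direction I would take a unit grid drawing of $G$, scale it by a factor of~$4$ so that the wheels and subdivision vertices can be inserted at integer points, and check that in the resulting drawing of $H$ two vertices lie at Euclidean distance~$1$ precisely when they are adjacent; placing a unit pixel centered at each vertex then yields a valid one-pixel representation. The harder converse direction is where the main obstacle lies: I would need to establish that the subdivided wheel is \emph{rigid}, i.e.\ that in any one-pixel-per-vertex representation its five pixels must form a $3\times 3$ square up to the symmetries of the square. Granting this rigidity, each vertex of $G$ becomes a $3\times 3$ block, and for adjacent $u,v$ the single subdivision pixel on $uv$ must touch both blocks; a short case analysis forces the two block centers to be aligned horizontally or vertically at distance exactly~$4$, so contracting the construction and scaling by $1/4$ recovers a unit grid drawing of $G$ that respects the prescribed angles.

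The crux is thus the rigidity of the wheel gadget together with the bookkeeping that no two pixels of distinct blobs touch unless forced to; I expect proving rigidity---ruling out all alternative non-square packings of the five pixels---to be the step requiring the most care, whereas the size and polynomiality of the reduction (the instance $H$ has $O(|V(G)| + |E(G)|)$ vertices and is clearly constructible in polynomial time) are routine.
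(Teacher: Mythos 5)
Your proposal is correct and follows essentially the same approach as the paper's proof: the same reduction from unit-length grid drawing with prescribed angles~\cite{bc-cmwlvl-87}, the same wheel-plus-subdivision gadget, and the same two-directional argument via scaling by $4$ and $1/4$. The wheel-rigidity step you flag as needing the most care is in fact asserted as obvious in the paper's proof, so your treatment is, if anything, slightly more candid about where the real work lies.
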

\end{backInTimeThm}
\begin{proof}
  Clearly the corresponding decision problem is in NP, thus it remains
  to show NP-hardness.  Let $G$ be a planar graph of maximum
  degree~4 and assume that the angles between adjacent edges are
  prescribed.  We define a graph~$H$ as follows (see
  Figs.~\ref{fig:complexity}a and~~\ref{fig:complexity}b).  First,
  replace every vertex by a
  wheel with five vertices such that the angles between the edges are
  respected.  Second, subdivide every edge except those that are
  incident to the center of a wheel.  We claim that $G$ admits a grid
  drawing with edges of length~1 (respecting the prescribed angles)
  if and only if $H$ admits a representation where every
  vertex is represented by exactly one pixel.

\begin{figure}[b]
  \vspace{2ex}

  \centering
  \includegraphics[page=1]{complexity}
  \caption{(a)~The graph $G$ with prescribed angles between edges,
    edges drawn with length~1. (b)~The graph $H$ drawn
    with edge length~1. (c)~Representation of $H$ with unit
    squares.  (d)~representation of a
    subgraph of $H$ corresponding to an edge $uv$ in $G$.}
  \label{fig:complexity}
\end{figure}

  Assume $G$ admits a grid drawing with edges of length~1.  Scaling
  the drawing by a factor of~4 and suitably adding the new vertices
  and edges clearly yields a drawing of $H$ with edges of length~1,
  such that two vertices have distance~1 only if they are adjacent;
  see Fig.~\ref{fig:complexity}b.  For every vertex~$v$ of~$H$,
  create a pixel $P_v$ with $v$ at its center
  (Fig.~\ref{fig:complexity}c). Clearly, for two adjacent vertices
  $u$ and $v$ in $H$, the pixels $P_u$ and $P_v$ touch as the edge
  $uv$ has length~1 in the drawing of $H$. Moreover, two pixels $P_u$,
  $P_v$ touch only if $u$, $v$ have distance~1, hence only if
  $u$, $v$ are adjacent. Thus this
 gives a pixel 
 representation of~$H$.

  Conversely, assume $H$ admits a representation such that every
  vertex $v$ is represented by a single pixel.  Obviously, the
  subdivided wheel of size~4 has a unique representation (up to
  symmetries) consisting of a square of $3\times 3$ pixels.  Consider
  two adjacent vertices $u$ and $v$ of $G$.  Then there is a $3\times
  3$ square for $u$ and one for $v$.  As $u$ and $v$ adjacent in $G$,
  there must be a pixel representing the subdivision vertex on the
  edge $uv$ in $H$ that touches both $3\times 3$ squares (of $u$ and
  $v$) as in Fig.~\ref{fig:complexity}d.  Thus, the straight line from
  the center of the square representing $u$ to the center of the
  square representing $v$ is horizontal or vertical and has
  length~4.  Hence, we obtain a drawing of $G$ with edges of
  length~4.  Scaling by a factor of ${1}/{4}$ yields a
  grid drawing of~$G$ with edges of length~1.
\end{proof}

\begin{backInTimeThm}{thm-voxel-NPC}
\begin{theorem}
  It is NP-complete to minimize the size of a voxel representation of
  a graph.
\end{theorem}
\end{backInTimeThm}
\begin{proof}
  Again, the corresponding decision problem is clearly in NP.  To show
  NP-hard\-ness, we reduce from the 2D case.  To this end, we build a
  rigid structure called \emph{cage} that forces the graph in which we
  are actually interested to be drawn in a single plane.

  \begin{figure}[tb]
    \centering
    \includegraphics[page=1]{complexity-3d}
    \caption{Illustration for the hardness proof in 3D. (a) A
      2-dimensional cage with thickness $3$ and interior face of size
      $8 \times 3$.  (b) A 3-dimensional cage with thickness $1$ and
      interior of size $7 \times 3 \times 7$. (c) Attaching $v$ to two
      sides of the box forces it into interior of cage.}
    \label{fig:complexity-3d}
  \end{figure}

  To simplify notation, we first prove for the 2-dimensional
  equivalent of a 3-dim\-en\-sional cage that it actually is a rigid
  structure.  We then extend this to 3D.  The cage is basically the
  grid graph with a hole; see Fig.~\ref{fig:complexity-3d}a.  
  More precisely, the cage is defined
  by two parameters, the \emph{thickness} $t$, which is an integer,
  and by the \emph{interior} rectangle $w\times h$, with
  integer width and height.  Given these parameters,
  the corresponding cage is the graph obtained from the $(2t+w) \times
  (2t+h)$ grid by deleting a $w\times h$ grid such that the distance
  from the external face to the large internal face corresponding to
  the interior is~$t$.  We call this internal face the \emph{interior
    face}.  Fig.~\ref{fig:complexity-3d}a shows the cage with
  thickness~$3$ and interior $8\times 3$ together with a contact
  representation with exactly one pixel per vertex.

  Consider a pixel representation $\Gamma$ of the cage of
  thickness $t$ with interior $w\times h$.  We show that either the
  bounding box of the interior face has size at most $w\times h$ or
  $\Gamma$ uses at least one pixel per vertex plus $t$ additional
  pixels.  Thus, if we force some structure to lie in the
  interior of the cage, we can make the cost for using an area
  exceeding $w\times h$ arbitrarily large by increasing the thickness
  $t$ appropriately.

  We partition the cage into cycles $C_1, \dots, C_t$ where the
  vertices of $C_i$ have distance $i$ from the interior face.
  Consider $C_1$, which is the cycle bounding the interior face.  The
  cycle $C_1$ has four \emph{corner vertices} that are incident to two
  vertices in the outer face of~$C_1$.  All remaining vertices are
  incident to one vertex in the outer face.  Requiring $C_1$ to be
  represented with exactly one pixel per vertex such that the
  corner vertices have two sides and every other vertex has one side
  incident to the outer face implies that $C_1$ must form a rectangle
  of size $(w+1)\times(h+1)$.  Thus, if the bounding box of the
  interior face exceeds $w \times h$, $C_1$ requires at least one
  additional pixel.  Moreover, the bounding box of the outer
  face of $C_1$ exceeds $(w+1)\times(h+1)$.  Hence, an inductive
  argument shows that one requires at least one additional pixel
  for each cycle $C_1, \dots, C_t$, proving the
  claim. 

  Analogously, we can build cages in 3D with thickness $t$ and
  interior $w \times h \times d$, by taking a 3D grid of size $(2t +
  w) \times (2t + h) \times (2t + d)$ and deleting a grid of size $w
  \times h \times d$.  Fig.~\ref{fig:complexity-3d}b shows the cage
  with~$t=1$, and $w\times h \times d = 7 \times 3 \times 7$.  Assume
  that we have a graph~$G$ for which we want to find an minimum-size pixel
  representation (in 2D).  We build a 3D cage, choose $t$, $w$ and
  $d$ to be very large, and set $h = 3$.  To force~$G$ to lie in the
  interior of the cage, we pick a vertex $v$ of $G$ and connect it to
  two vertices of the cage as shown in Fig.~\ref{fig:complexity-3d}c.
  This forces $G$ to completely lie in the interior of the cage.  As
  this interior has height~$3$ and no vertex of $G$ (except for $v$)
  is allowed to touch another vertex of the cage, $G$ is forced to lie
  in a single plane when choosing $t$ sufficiently large (obviously,
  polynomial size is sufficient). Moreover, choosing~$w$ and~$d$
  sufficiently large, ensures that the size of the plane available
  for~$G$ does not restrict the possible representations of~$G$.
  Finding a minimum-size pixel representation of~$G$ is
  equivalent to finding a minimum-size voxel representation of the resulting graph~$G'$.
\end{proof}

\fi

\end{document}